\def\ps@pprintTitle{%
 \let\@oddhead\@empty
 \let\@evenhead\@empty
 \def\@oddfoot{}%
 \let\@evenfoot\@oddfoot}
\renewcommand{\baselinestretch}{1.5}
\numberwithin{equation}{section}
\newtheorem{Theorem}{Theorem}
\newtheorem{Proposition}{Proposition}
\newtheorem{Corollary}{Corollary}
\newtheorem{Remark}{Remark}
\newtheorem{Assumption}{Assumption}
\newtheorem{Definition}{Definition}
\newenvironment{proof}[1][Proof]{\textbf{#1.} }{\ \rule{0.5em}{0.5em}}
\newcommand{\EE}{{\mathbb E}}
\newcommand{\VV}{{\mathbb V}}
\newcommand{\RR}{{\mathbb R}}
\newcommand{\PP}{{\mathbb P}}
\newcommand{\LL}{{\mathcal L}}
\newcommand{\Ft}{{\mathcal F}}
\newcommand{\As}{{\mathcal A}}
\newcommand{\ones}{\mathbf{1}}
\newcommand{\vX}{{\boldsymbol X}}
\newcommand{\vW}{{\boldsymbol W}}
\newcommand{\vgamma}{{\boldsymbol \gamma}}
\newcommand{\vdelta}{{\boldsymbol \delta}}
\newcommand{\vxi}{{\boldsymbol \xi}}
\newcommand{\vpi}{{\boldsymbol \pi}}
\newcommand{\valpha}{{\boldsymbol \alpha}}
\newcommand{\vmu}{{\boldsymbol \mu}}
\newcommand{\veta}{{\boldsymbol \eta}}
\newcommand{\vzeta}{{\boldsymbol \zeta}}
\newcommand{\vrho}{{\boldsymbol \rho}}
\newcommand{\mA}{{\mathbf A}}
\newcommand{\mI}{{\mathbf I}}
\newcommand{\vB}{{\boldsymbol B}}
\newcommand{\vd}{{\boldsymbol d}}
\newcommand{\mQ}{{\mathbf Q}}
\newcommand{\vx}{{\boldsymbol x}}
\newcommand{\mSigma}{\mathbf \Sigma}
\begin{document}

\begin{frontmatter}

\title{\textbf{Outperformance and Tracking}: \\
\textbf{\normalsize Dynamic Asset Allocation for Active and Passive Portfolio Management}
\\[0.2em]
{\small \textbf{Applied Mathematical Finance, Forthcoming}\tnoteref{t1}}
}
\tnotetext[t1]{The authors would like to thank NSERC for partially funding this work.}

\author{Ali Al-Aradi}
\ead{ali.al.aradi@utoronto.ca}

\author{Sebastian Jaimungal}
\ead{sebastian.jaimungal@utoronto.ca}
\address{Department of Statistical Sciences, University of Toronto}

\begin{abstract}
Portfolio management problems are often divided into two types: active and passive, where the  objective is to outperform and track a preselected benchmark, respectively. Here, we formulate and solve a dynamic asset allocation problem that combines these two objectives in a unified framework. We look to maximize the expected growth rate differential between the wealth of the investor's portfolio and that of a \textit{performance} benchmark while penalizing risk-weighted deviations from a given \textit{tracking} portfolio. Using stochastic control techniques, we provide explicit closed-form expressions for the optimal allocation and we show how the optimal strategy can be related to the growth optimal portfolio. The admissible benchmarks encompass the class of functionally generated portfolios (FGPs), which include the market portfolio, as the only requirement is that they depend only on the prevailing asset values. Finally, some numerical experiments are presented to illustrate the risk-reward profile of the optimal allocation. \\
\end{abstract}

\begin{keyword}
Active portfolio management;
Stochastic Portfolio Theory;
stochastic control;
portfolio selection;
growth optimal portfolio;
functionally generated portfolios.
\end{keyword}

\end{frontmatter}

\setlength{\parskip}{0.5cm} % 1ex plus 0.5ex minus 0.2ex}
\setlength{\parindent}{1cm}
\titlespacing{\section}{0pt}{\parskip}{-0.5\parskip}
\titlespacing{\subsection}{0pt}{\parskip}{-0.5\parskip}
\titlespacing{\subsubsection}{0pt}{\parskip}{-0.5\parskip}

\section{Introduction}
\textit{Active portfolio management} aims at constructing portfolios that generate superior returns. This is in contrast to \textit{passive portfolio management}, where the goal is to track a given index. The goals of active management can be further divided into those of an \textit{absolute} nature and those of a \textit{relative} nature. Absolute goals do not involve any external processes, e.g. maximizing expected growth, minimizing the variance of terminal wealth or the probability of ruin. Relative goals are those that involve an external benchmark often given in the form of a portfolio of assets, e.g. maximizing the probability of outperforming the market portfolio. In some cases, the performance of an active manager may also incorporate their deviation from the benchmark, as they are penalized for tracking error or taking on excessive active risk. It may also be desirable to hold portfolios with minimal levels of absolute risk, measured in terms of the volatility of the portfolio's wealth process. The main goal of this paper is to construct portfolios that achieve optimal relative returns against a \textit{performance benchmark} given by a portfolio that depends on prevailing asset values. We also allow for an investor to anchor their portfolio to a \textit{tracking benchmark} by penalizing deviations from the latter. It is often the case for active managers that these two portfolios coincide. Care must be taken in setting up and solving the control problem considering that the weights of the benchmark process are themselves stochastic processes, and must be incorporated as additional state variables.

The impetus for our work is \cite{oderda2015}, where the author investigates optimizing relative returns using stochastic portfolio theory (SPT), but only partially solves the problem via static optimization. Here, we significantly improve on those results by using optimal stochastic control techniques. We succeed in providing explicit closed form expressions for the optimal allocation. In addition, we demonstrate how the optimal portfolio we derive relates to the growth optimal portfolio, which maximizes expected growth over any horizon and plays an important role in the financial literature.

One major issue that is faced by investors that seek outperformance is the need to robustly estimate the growth rates of individual assets. This can be a very difficult goal to achieve. However, allowing the investor to anchor to a portfolio of their choice embeds in our stochastic control problem a portfolio tracking problem which can be used to overcome the issues of estimation. In particular, an investor can choose to track a portfolio known to have certain outperformance properties, such as certain functionally generated portfolios or universal portfolios, and attempt to improve their risk-adjusted performance via the stochastic control problem that we pose.

\section{Literature Review}

There is a great deal of literature dedicated to solving various portfolio selection problems via stochastic control theory. The seminal work of \cite{Merton69} introduced the dynamic asset allocation and consumption problem, utilizing stochastic control techniques to derive optimal investment and consumption policies. Extensions can be found in \cite{Merton71}, \cite{Magill76}, \cite{Davis90}, \cite{Browne97} and more recently \cite{BlanchetScaillet2008}, \cite{Liu2013} and \cite{Ang2014}  to name a few. The focus in these papers is generally on maximizing the utility of discounted consumption and terminal wealth or minimizing shortfall probability, or other related absolute performance measures that are independent of any external benchmark or relative goal.

The question of optimal \textit{active management} was introduced in \cite{Browne1999a} and later refined in \cite{Browne1999b}. In these papers, an investor is   able to trade in a number of assets modeled by geometric Brownian motions (GBMs) and a risk-free asset. The investor aims to maximize the performance of their portfolio \textit{relative} to a stochastic benchmark which is also modeled as an exogenously given GBM. They assume markets to be ``incomplete'' by modeling the benchmark with a Brownian motion that is independent of the risky securities. This modeling approach allows for a wide range of benchmarks, including inflation, exchange rates and other portfolios. The paper provides a general result and is applied to find the optimal portfolio strategy for a number of active management problems, namely: (i) maximizing the probability that the investor's wealth achieves a certain performance goal relative to the benchmark, before falling below it to a predetermined shortfall; (ii) minimizing the expected time to reach the performance goal; (iii) maximizing the expected reward obtained upon reaching the goal;  (iv) minimizing the expected penalty paid upon falling to the shortfall level; (v) maximizing the utility of relative wealth. \cite{Browne2000} extends the work to include a risk constraint in the optimization problem, although the setup there is restricted to a complete market where the benchmark portfolio consists only of assets from the asset universe. \cite{Browne2000} has the same objectives as \cite{Browne1999a} and \cite{Browne1999b}, with the additional constraint that the probability of success is bounded below by a given level.

One important shortcoming in these papers, particularly regarding the problem of beating a benchmark portfolio, is that the solutions only apply to benchmark portfolios with constant weights and can be extended at most to the case of deterministic weight processes. This is because the case of stochastic weights would require including the benchmark weights as state variables, hence significantly increasing the dimensionality of the control problem. It can be argued that the case of stochastic weights is more relevant, considering that one of the simplest and arguably most widespread benchmark is the market portfolio, whose weights evolve stochastically through time.

Much of the model setup in this paper is based on stochastic portfolio theory (see e.g., \cite{fernholz2002} or \cite{karatzas2009} for a more recent overview), which is a flexible framework for analyzing market structure and portfolio behavior. SPT is a descriptive, rather than a normative, approach to addressing these issues and relies on a minimal set of assumptions that are readily satisfied in real equity markets. One of the seminal papers in the development of SPT is \cite{fernholz1982} who introduce the central tools used in SPT, primarily the notion of excess growth, along with a characterization of long-term behavior of stocks and equity markets and the conditions for achieving market equilibrium. \cite{fernholz1999} expands on the work of \cite{fernholz1982} by building on the notion of excess growth and introducing the concept of market diversity and entropy as a measure of diversity, and their role in describing stock market equilibrium. \cite{fernholz1999b} further develops the SPT framework by introducing portfolio generating functions, a tool that can be used to construct dynamic portfolios that depend only on the market weights of the asset universe. \cite{fernholz2001} extends the concept of functionally generated portfolios to functions of ranked market weights, which allows for the construction of portfolios based on company size.

A main focus in SPT is determining conditions under which relative arbitrage opportunities - portfolios that are guaranteed to outperform the market portfolio - exist over various time horizons, and methods for constructing such portfolios. This is discussed in works such as \cite{fernholz2005b}, \cite{banner2008}, \cite{pal2013}, \cite{wong2015optimization}, \cite{pal2016geometry} and \cite{fernholz2016}, among others. It should be noted that, unlike the present work, SPT is typically concerned with outperformance in the a.s. sense. An exception to this is the work of \cite{vervuurt2016}, in which machine learning techniques are utilized to achieve outperformance in expectation by maximizing the investor's Sharpe ratio.

An application of SPT in the context of active management is found in \cite{oderda2015}, where the author seeks to explain the superior risk-adjusted performance of rule-based, non cap-weighted investment strategies relative to the market portfolio. The dynamics of relative wealth are described using results from SPT and a static optimization is applied to solve for the investor's optimal portfolio. The solution to this utility maximization problem yields an optimal portfolio that holds a proportion in five passive rule-based portfolios: the market portfolio, the equal-weight portfolio, the global minimum variance portfolio, the risk parity portfolio and the high-cash flow portfolio.

As mentioned earlier, here, we significantly improve on the results in \cite{oderda2015} by using optimal stochastic control techniques.  We succeed in providing explicit closed form expressions for the optimal allocation. In addition, we demonstrate how the optimal portfolio relates to the growth optimal portfolio and how to apply our techniques to leverage the relative arbitrage properties of certain functionally generated portfolios.

\section{Model Setup} \label{sec:modelSetup}

\subsection{Market Model}

We begin with the typical SPT market model as described in, for example, \cite{fernholz2002}. Let $\vW = \left\{ \vW(t) = (W_1(t),..., W_k(t))' \right\}_{t \geq 0}$ be a standard $k$-dimensional Wiener process defined on the filtered probability space  $(\Omega, \Ft, \mathfrak{F}, \PP)$, where $\mathfrak{F} = \{\Ft_t\}_{t \geq 0}$ is the $\PP$-augmentation of the natural filtration generated by $\vW$, $\Ft_t = \sigma(\{\vW(s)\}_{s \in [0,t]})$. The Wiener processes are the drivers of the assets returns in the economy. Further, we will assume that the market consists of $n$ stocks, where $n \geq k$.
\begin{Definition}
The \textbf{stock price process} for asset $i$, $X_i = \left\{ X_i(t) \right\}_{t \geq 0}$,  is a continuous semimartingale of the form
\begin{equation}
X_i(t) = X_i(0) \cdot \exp \left( \int_0^t \left(\gamma_i(s) + \delta_i(s)\right) ds + \int_0^t \sum_{\nu=1}^{k} \xi_{i\nu}(s) ~ dW_\nu(t) \right)
\end{equation}
for $i = 1, ..., n$, where $\gamma_i$, $\delta_i$ and $\xi_{i\nu}$ are deterministic functions corresponding to the asset's \textbf{growth rate}, \textbf{dividend rate} and \textbf{volatility} with respect to the $\nu$-th source of randomness.
\end{Definition}

We require the following assumption on the model parameters:

\begin{Assumption} \label{asmp:1}
The growth, dividend and volatility functions, $\gamma_i$,  $\delta_i$ and $\xi_{iv}$, are bounded and differentiable for $i = 1,... , n$ and $v = 1,..., k$.
\end{Assumption}

One model which satisfies this assumption is one where the growth, dividend and volatility parameters are constant; a multivariate extension of the typical Black-Scholes model. This model will be used in the implementation section to obtain some numerical results. Note that our market model does not include a risk-free asset, however it would not be difficult to extend the control problem to incorporate such an asset. Additionally, Assumption \ref{asmp:1} is stronger than the usual integrability assumptions made in SPT contexts, but simplifies the process of solving the stochastic control problem via a dynamic programming approach. While it is possible to allow for the model parameters to be stochastic, this will require additional assumptions and it will make the proofs of optimality more cumbersome without adding a great deal to the main results of this paper.

Without loss of generality, we assume each asset has a single outstanding share, so that $X_i$ represents the total market capitalization for the corresponding security. Also, it is more convenient to work with log prices, $\log X_i$, which satisfy the stochastic differential equation (SDE):
\begin{equation} \label{stockSDE}
d \log X_i(t) = \left(\gamma_i(t) + \delta_i(t) \right) dt + \vxi_i(t)' ~ d\vW(t)\,,
\end{equation}
where $\vxi_i(t) = (\xi_{i1}(t), ..., \xi_{ik}(t))'$ is a $k \times 1$ vector of volatilities. This can also be expressed in vector notation as follows:
\begin{equation}
d \log \vX(t) = \left(\vgamma(t) + \vdelta(t) \right) dt + \vxi(t) ~ d\vW(t)\,,
\end{equation}
where
\begin{align*}
\underset{\color{red} (n \times 1)}{\log \vX(t)} &= \left(\log X_1(t), ... , \log X_n(t) \right)'\,,
&& \underset{\color{red} (n \times 1)}{\vgamma(t)} = (\gamma_1(t), ... , \gamma_n(t))'\,, \\
\underset{\color{red} (n \times 1)}{\vdelta(t)} &= (\delta_1(t), ... , \delta_n(t))'\,,
&& \underset{\color{red} (n \times k)}{\vxi(t)} = (\vxi_1(t), ... , \vxi_n(t))'\,.
\end{align*}

Next, we introduce the covariance process:
\begin{Definition}
The \textbf{covariance process} is a matrix-valued function given by:
\begin{equation} \label{eqn:sigma}
\underset{\color{red} (n \times n)}{\mSigma(t)} = \vxi(t) \cdot \vxi(t)'\,.
\end{equation}
\end{Definition}
The covariance of asset $i$ with asset $j$ can then be expressed as $\mSigma_{ij}(t) = \vxi_i(t)'\, \vxi_j(t)$. We assume next that the market satisfies the usual \textbf{nondegeneracy condition} that ensures that the covariance matrix is invertible:
\begin{Assumption}
The covariance process $\mSigma(t)$ is nonsingular for all $t \geq 0$, and there exists $\varepsilon_1 > 0$ such that for all $\mathbf{x} \in \RR^n$ and $t \geq 0$
\begin{equation}
\mathbf{x}' \mSigma(t) \mathbf{x} ~\geq~ \varepsilon_1 \| \mathbf{x} \|^2 \qquad \PP\mbox{-a.s.}
\end{equation}
\end{Assumption}
Thus, $\mSigma(t)$ is positive definite for all $t \geq 0$, and $\mSigma^{-1}(t)$ exists for all $t \geq 0$.
We also assume the market has \textbf{bounded variance}:
\begin{Assumption}
There exists $M_1 > 0$ such that for all $\mathbf{x} \in \RR^n$ and $t \geq 0$,
\begin{equation}
\mathbf{x}' \mSigma(t) \mathbf{x} ~\leq~ M_1 \| \mathbf{x} \|^2 \qquad \PP\mbox{-a.s.}
\end{equation}
\end{Assumption}

\subsection{Portfolio Dynamics}
\begin{Definition}
A \textbf{portfolio} is a measurable, $\mathfrak{F}$-adapted, vector-valued process $\vpi = \{ \vpi(t) \}_{t \geq 0}$, where $\vpi(t) = \left(\pi_1(t), ..., \pi_n(t) \right)$ such that, for all $t \geq 0$, $\vpi(t)$ is bounded $\PP$-almost surely and satisfies:
\begin{equation}
\pi_1(t) + \cdots + \pi_n(t) = 1 \quad \PP\mbox{-a.s.}
\end{equation}
\end{Definition}

Each component of $\vpi$ represents the proportion of wealth invested in the corresponding stock; negative values of $\pi_i(t)$ indicate a short position in stock $i$. From a well-known result in SPT, the logarithm of the \textbf{portfolio value process}, $Z_\vpi = \{ Z_\vpi(t) \}_{t \geq 0}$, satisfies the SDE
\begin{equation} \label{portSDE}
d \log Z_\vpi(t) = \left(\gamma_\vpi(t) + \delta_\vpi(t) \right) ~dt + \vxi_\vpi(t)' ~ d\vW(t)\,,
\end{equation}
\begin{align*}
\text{where} ~~~ \underset{\color{red} (1 \times 1)}{\gamma_\vpi(t)} &= \vpi(t) ' \vgamma(t) + \gamma^*_\vpi(t)\,,
&& \underset{\color{red} (1 \times 1)}{\gamma^*_\vpi(t)}  = \tfrac{1}{2} \left[ \vpi(t)' \mbox{diag}(\mSigma(t)) - \vpi(t)' \mSigma(t) \vpi(t)  \right]\,, \\
\underset{\color{red} (1 \times 1)}{\delta_\vpi(t)} & = \vpi(t)' \vdelta(t)\,,
&& \underset{\color{red} (k \times 1)}{ \vxi_\vpi(t)} =  \vxi(t)' \vpi(t)\,.
\end{align*}

\noindent Here, $\gamma_\vpi$, $\delta_\vpi$ and $\vxi_\vpi$ are the \textbf{portfolio growth rate}, \textbf{portfolio dividend} and \textbf{portfolio volatility} processes, respectively, and $\gamma^*_\vpi$ is the \textbf{excess growth rate} process of the portfolio $\vpi$. The excess growth rate is equal to half of the difference between the portfolio-weighted average of asset return variances and the portfolio variance. This is a pivotal quantity which plays an important role in SPT, as discussed in more detail in \cite{fernholz2002}.

One portfolio of particular interest to us is the \textbf{market portfolio}. The market portfolio plays a central role in SPT and, in our context, is one of the most common benchmarks assigned to active managers.
\begin{Definition}
The \textbf{market portfolio} (process), $\vmu = \{\vmu(t)\}_{t \geq 0}$, is the portfolio with weights given by:
\begin{equation}  \label{def:MW}
\quad \quad \mu_i(t) = \frac{X_i(t)}{X_1(t) + \cdots + X_n(t)} = \frac{X_i(t)}{Z_\vmu(t)}\,, \qquad \mbox{for } i = 1,..., n\,.
\end{equation}
\end{Definition}
In other words, the market portfolio holds each stock according to its proportional market capitalization.
\begin{Remark}
Note that the market portfolio is a vector-valued stochastic process. It is important to highlight the fact that it is a stochastic process and its dynamics must be taken into account when setting up and solving our optimization problem.
\end{Remark}

\subsection{Relative Return Dynamics}
The main state variable in our optimization problem will be the ratio of the wealth of an arbitrary portfolio $\vpi$ relative to that of a preselected performance benchmark $\vrho$. The following proposition specifies the dynamics of the logarithm of this process.

\begin{Proposition}\label{prop:Y_SDE}
Let $Y^{\vpi,\vrho}(t) = \log \left(\frac {Z_\vpi(t)}{Z_\vrho(t)} \right)$ denote the logarithm of relative portfolio wealth for the portfolios $\vpi$ and $\vrho$. Then this process satisfies the SDE:
\begin{align} \label{relRet}
dY^{\vpi,\vrho}(t) &=~ a(t,\vrho(t),\vpi(t)) ~dt + \left( \vxi_\vpi (t) - \vxi_\vrho(t) \right)' ~ d\vW(t)\;,
\end{align}
where
\begin{align*}
a(t,\vrho(t),\vpi(t)) ~&=~ \left(\gamma_\vpi(t) + \delta_\vpi(t) \right) - \left(\gamma_\vrho(t) + \delta_\vrho(t) \right) \\
~&=~ \vpi(t)' \valpha(t) - \tfrac{1}{2} \vpi(t)' \mSigma(t) \vpi(t) - \left(\gamma_\vrho(t) + \delta_\vrho(t) \right) \\
\valpha(t) ~&=~ \vgamma(t) + \vdelta(t) + \tfrac{1}{2} \mbox{diag}\left( \mSigma(t) \right)
\end{align*}
\end{Proposition}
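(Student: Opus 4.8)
The plan is to exploit the fact that $Y^{\vpi,\vrho}$ is simply a difference of two log-wealth processes, so that beyond the Itô correction already encoded in \eqref{portSDE} nothing further is needed. First I would apply the portfolio log-dynamics \eqref{portSDE} to both $\vpi$ and the benchmark $\vrho$. The key point to flag is that \eqref{portSDE} is valid for \emph{any} admissible portfolio, including benchmarks such as $\vmu$ whose weights are themselves stochastic processes; consequently
\[
d\log Z_\vrho(t) = \left(\gamma_\vrho(t) + \delta_\vrho(t)\right)dt + \vxi_\vrho(t)'\,d\vW(t),
\]
with the coefficients now interpreted as (possibly stochastic) processes rather than deterministic functions.

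Writing $Y^{\vpi,\vrho}(t) = \log Z_\vpi(t) - \log Z_\vrho(t)$ and subtracting the two SDEs term by term, linearity of the stochastic differential immediately yields
\[
dY^{\vpi,\vrho}(t) = \left[\left(\gamma_\vpi(t)+\delta_\vpi(t)\right) - \left(\gamma_\vrho(t)+\delta_\vrho(t)\right)\right]dt + \left(\vxi_\vpi(t)-\vxi_\vrho(t)\right)'\,d\vW(t),
\]
which is exactly \eqref{relRet} with $a$ in its first stated form. Here I would emphasize that passing through the logarithms is precisely what avoids the quotient-rule cross-variation terms one would otherwise confront by differentiating the ratio $Z_\vpi/Z_\vrho$ directly; the excess growth contributions have already been absorbed into the $\gamma_\vpi$, $\gamma_\vrho$ terms.

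It then remains only to recast the drift attributable to $\vpi$ into the claimed form. Substituting the definitions $\gamma_\vpi(t) = \vpi(t)'\vgamma(t) + \gamma^*_\vpi(t)$, $\delta_\vpi(t) = \vpi(t)'\vdelta(t)$, together with $\gamma^*_\vpi(t) = \tfrac12[\vpi(t)'\mbox{diag}(\mSigma(t)) - \vpi(t)'\mSigma(t)\vpi(t)]$, I would collect the terms linear in $\vpi(t)$ to factor out $\vpi(t)'[\vgamma(t)+\vdelta(t)+\tfrac12\mbox{diag}(\mSigma(t))] = \vpi(t)'\valpha(t)$, leaving the quadratic term $-\tfrac12\vpi(t)'\mSigma(t)\vpi(t)$ intact. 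This produces the second stated form of $a$ along with the definition of $\valpha$, while the benchmark contribution $(\gamma_\vrho(t)+\delta_\vrho(t))$ is left unexpanded.

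Since every step is an algebraic identity, there is no genuine analytic obstacle. The only point requiring care is the justification that \eqref{portSDE} applies to the benchmark $\vrho$ when its weights are stochastic; this is the same \emph{well-known result in SPT} invoked for the portfolio value process, and it carries over verbatim to $\vrho$, which is why the benchmark's stochastic dynamics enter the drift only through $\gamma_\vrho$, $\delta_\vrho$ and its volatility $\vxi_\vrho$.
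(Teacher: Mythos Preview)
Your proposal is correct and follows essentially the same approach as the paper: write $Y^{\vpi,\vrho}(t) = \log Z_\vpi(t) - \log Z_\vrho(t)$ and apply the portfolio dynamics \eqref{portSDE} to each term. The paper's own proof is a one-line remark to this effect, so your version is simply a more detailed elaboration of the same argument, including the algebraic verification of the second form of $a$ that the paper leaves implicit.
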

\begin{proof}
The proof follows from noticing that $Y^\pi(t)  = \log Z_\vpi(t) - \log Z_\vrho(t)$, and then applying the portfolio dynamics given in Equation \eqref{portSDE}.
\end{proof} \\

\noindent Note that in the proposition above $\valpha(t) = (\alpha_1(t), ... ,\alpha_n(t))'$ is a vector of the instantaneous \textbf{rate of return} for each asset in the economy.

\begin{Corollary}\label{coro:quadratic_var}
The quadratic variation of $Y^{\vpi,\vrho}$ is given by:
\begin{align}
d\left\langle Y^{\vpi,\vrho}, Y^{\vpi,\vrho} \right\rangle_t &= \left(\vpi(t) - \vrho(t)\right)' \mSigma(t) \left(\vpi(t) - \vrho(t)\right) ~dt\;.
\end{align}
\end{Corollary}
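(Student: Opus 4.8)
The plan is to read the quadratic variation directly off the SDE in Proposition \ref{prop:Y_SDE}, using the standard fact that the quadratic variation of a continuous semimartingale is determined entirely by its local-martingale (diffusion) part; the finite-variation drift term $a(t,\vrho(t),\vpi(t))\,dt$ contributes nothing.

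First I would isolate the martingale part of \eqref{relRet}, namely the stochastic integral $\int_0^t \left( \vxi_\vpi(s) - \vxi_\vrho(s)\right)'\,d\vW(s)$. Since $\vW$ is a standard $k$-dimensional Wiener process, its components are independent standard Brownian motions, so each has quadratic variation $dt$ while the cross-variations vanish; consequently the quadratic variation of a scalar integral $\int \boldsymbol\phi(s)'\,d\vW(s)$ equals $\int \|\boldsymbol\phi(s)\|^2\,ds$. Applying this with $\boldsymbol\phi = \vxi_\vpi - \vxi_\vrho$ gives
\[
d\left\langle Y^{\vpi,\vrho}, Y^{\vpi,\vrho}\right\rangle_t = \left(\vxi_\vpi(t) - \vxi_\vrho(t)\right)'\left(\vxi_\vpi(t) - \vxi_\vrho(t)\right)dt.
\]

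Next I would substitute the portfolio volatility vectors from the portfolio dynamics display, namely $\vxi_\vpi(t) = \vxi(t)'\vpi(t)$ and $\vxi_\vrho(t) = \vxi(t)'\vrho(t)$, so that $\vxi_\vpi(t) - \vxi_\vrho(t) = \vxi(t)'\left(\vpi(t) - \vrho(t)\right)$. Expanding the inner product and recalling the covariance process definition $\mSigma(t) = \vxi(t)\,\vxi(t)'$ from \eqref{eqn:sigma} then yields
\[
\left(\vpi(t) - \vrho(t)\right)'\vxi(t)\,\vxi(t)'\left(\vpi(t) - \vrho(t)\right) = \left(\vpi(t) - \vrho(t)\right)'\mSigma(t)\left(\vpi(t) - \vrho(t)\right),
\]
which is exactly the claimed expression.

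There is no genuine obstacle here: the statement is an immediate consequence of Proposition \ref{prop:Y_SDE} together with the bilinearity of quadratic covariation and the independence of the Wiener components. The only point requiring minor care is the bookkeeping of matrix dimensions — confirming that $\vxi(t)'\left(\vpi(t) - \vrho(t)\right)$ is the correct $k \times 1$ diffusion coefficient vector and that the contraction $\vxi(t)\,\vxi(t)'$ reproduces the $n \times n$ covariance matrix $\mSigma(t)$.
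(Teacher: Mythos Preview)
Your proposal is correct and follows essentially the same approach as the paper: read off the diffusion coefficient from Proposition~\ref{prop:Y_SDE}, form the squared norm $(\vxi_\vpi - \vxi_\vrho)'(\vxi_\vpi - \vxi_\vrho)$, substitute $\vxi_\vpi = \vxi'\vpi$ and $\vxi_\vrho = \vxi'\vrho$, and collapse $\vxi\,\vxi'$ to $\mSigma$ via \eqref{eqn:sigma}. The paper's proof is line-for-line the same computation, so there is nothing further to compare.
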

\begin{proof}
Using the volatility of $Y^{\vpi,\vrho}$ given in Proposition \ref{prop:Y_SDE} and the definition of $\mSigma$ and $\vxi_\vpi$ in \eqref{eqn:sigma} and \eqref{portSDE}, respectively, we have:
\begin{align*}
d\left\langle Y^{\vpi,\vrho}, Y^{\vpi,\vrho} \right\rangle_t &= \left( \vxi_\vpi (t) - \vxi_\vrho(t) \right)' \left( \vxi_\vpi (t) - \vxi_\vrho(t) \right) ~ dt \\
&= \left(\vpi(t) - \vrho(t)\right)' \vxi(t) \cdot \vxi(t)' \left(\vpi(t) - \vrho(t)\right) ~dt \\
&= \left(\vpi(t) - \vrho(t)\right)' \mSigma(t) \left(\vpi(t) - \vrho(t)\right) ~dt
\end{align*}
which is the required result. \hfill \end{proof}

In the context of active management this quantity is typically referred to as (instantaneous) \textbf{active risk} or \textbf{tracking error}. An active manager is typically penalized for incurring excessive active risk over their investment horizon.

\begin{Remark}
\cite{fernholz2002} refers to this quantity as the relative variance process of $\vpi$ versus $\vrho$, and denotes by $\tau_{\vpi \vpi}^\vrho(t)$. Note that this quantity is equal to zero if and only if $\vpi$ and $\vrho$ are equal.
\end{Remark}

\section{Stochastic Control Problem} \label{sec:stochasticControl}
In this section we formulate the main stochastic control problem. First, we fix two portfolios against which we measure our outperformance and our active risk, respectively. That is, we choose a \textbf{performance benchmark} $\vrho$, which the investor wishes to outperform, and a \textbf{tracking portfolio} $\veta$, which the investor will penalize deviations from.

\begin{Remark}
The common setup in active portfolio management is for the performance benchmark to be the same as the tracking portfolio. However, this general setup allows for added flexibility and certainly accommodates the case where $\vrho = \veta$. This separation is also useful for the situation where we remove the performance benchmark and track functionally generated portfolios to achieve our outperformance goal - this will be discussed in more detail later in the paper.
\end{Remark}

The primary objective is to determine the portfolio process $\vpi$ that maximizes the expected growth rate of relative wealth to $\vrho$ over the investment horizon $T$. This is equivalent to maximizing the expected utility of relative return assuming a log-utility function. Additionally, the investor is penalized for taking on excessive levels of active risk (measured against $\veta$). We also incorporate a general quadratic penalty term that is independent of the two benchmarks. Putting this together, the \textbf{performance criteria} of a portfolio $\vpi$ is given by:
\small
\begin{align} \label{eqn:perfCrit}
H^\vpi(t,y,\vx) = \EE_{t,y,\vx} \biggl[ \zeta_0 \cdot Y^{\vpi,\vrho}(T) - \frac{\zeta_1}{2} \int_t^T (\vpi(s) - \veta(s))' \mSigma(s) (\vpi(s) - \veta(s)) ~ ds  - \frac{\zeta_2}{2}  \int_t^T \vpi(s)' \mQ(s) \vpi(s) ~ds \biggr]\;,
\end{align}
\normalsize
where $\zeta_0,\zeta_1,\zeta_2 \geq 0$ and $\EE_{t,y,\vx}[ ~\cdot~ ]$ is shorthand for denoting $\EE[ ~\cdot~ | Y^{\vpi,\vrho}(t) = y, \vX(t) = \vx ]$. We also need the following assumptions:

\begin{Assumption}
The matrix-valued function $\mQ$ is symmetric, positive-definite, (element-wise) differentiable and satisfies
\[ \varepsilon_2 \| \mathbf{x} \|^2 ~\leq~ \mathbf{x}' \mQ(t) \mathbf{x} ~\leq~ M_2 \| \mathbf{x} \|^2 \]
for some $\varepsilon_2 > 0$ and $M_2 < \infty$ for all $\mathbf{x} \in \RR^n$ and $t \geq 0$.
\end{Assumption}

\begin{Assumption} \label{asmp:markov}
The performance benchmark process $\vrho = \left\{ \vrho(t) \right\}_{t \geq 0}$ and the tracking portfolio process $\veta = \left\{ \veta(t) \right\}_{t \geq 0}$ are Markovian in $\vX$; i.e. there exist bounded functions $\rho: [0,T] \times \RR^n_+ \to \RR^n$ and $\eta: [0,T] \times \RR^n_+ \to \RR^n$ such that $\vrho(t) = \rho(t,\vX(t))$ and  $\veta(t) = \eta(t,\vX(t))$. Additionally, we assume that the functions $\rho$ and $\eta$ are once differentiable with respect to all of their variables.
\end{Assumption}
Note that the assumption above is satisfied by the class of (time-dependent) functionally generated portfolios, which includes the ubiquitous market portfolio, as they depend only on the prevailing market weights which in turn depend only on the asset values.

Let us now discuss the components of the performance criteria \eqref{eqn:perfCrit}:
\begin{enumerate}
\item The first term, $\zeta_0 Y^{\vpi,\vrho}(T)$, is a terminal reward term which corresponds to the investor wishing to \textbf{maximize the expected growth rate differential between their portfolio and the performance benchmark $\vrho$}. It is also equivalent to \textbf{maximizing the expected utility of relative wealth assuming a log-utility function}. The constant $\zeta_0$ is a subjective parameter determining the amount of emphasis the investor wishes to place on benchmark outperformance. Setting $\zeta_0$ equal to 0 changes our portfolio problem to a tracking problem where the goal is to track the benchmark while minimizing some running quadratic penalty term. This will become particularly useful when we choose to track a functionally generated portfolio to achieve outperformance as it allows us to avoid estimating asset growth rates which can be extremely difficult.
\item The second term is a running penalty term which \textbf{penalizes tracking error/active risk}. That is, deviations from the tracking portfolio are penalized on a running basis, with deviations in riskier assets being penalized more heavily. Alternatively, this penalty term can be viewed as an attempt to \textbf{minimize the quadratic variation of relative wealth with respect to $\veta$, $Y^{\vpi,\veta}(t)$}. The parameter $\zeta_1$ represents the investor's tolerance for active risk.
\item The final term is a \textbf{general quadratic running penalty term} - that does not involve either benchmark - with an associated tolerance parameter $\zeta_2$. One possible choice for $\mQ(t)$ is the covariance matrix $\mSigma(t)$, which can be adopted to minimize the \textit{absolute} risk of the portfolio measured in terms of the quadratic variation of the portfolio wealth process, $Z_\vpi(t)$. Another option is to let $\mQ(t)$ be a constant diagonal matrix, which has the effect of penalizing allocation in each asset according to the magnitude of the corresponding diagonal entry. The investor can use this choice of $\mQ$ as a way of imposing a set of ``soft'' constraints on allocation to each asset. We will demonstrate later on how this penalty term can be interpreted as tilting away from certain assets or, more generally, ``shrinking'' towards some desired portfolio.
\end{enumerate}

\begin{Remark}
While it may be possible to extend this control problem by considering alternative utility functions, particularly to observe the effect of the performance benchmark on the optimal portfolio, this extension does not appear to be trivial. Using general utility functions would likely require other approaches
and lies outside the scope of this work. \\
\end{Remark}

Now, the performance criteria \eqref{eqn:perfCrit} can be simplified by noticing that:
\[ Y^{\vpi,\vrho}(T) = Y^{\vpi,\vrho}(t) + \int_t^T a(s,\vrho(s),\vpi(s)) ~ds  + \int_t^T \left( \vxi_\vpi (t) - \vxi_\vrho(t) \right)' ~ d\vW(s) \]
and that, since $\vxi(t)$ is square-integrable, the conditional expectation of the second integral above is equal to zero due to the martingale property of the stochastic integral. This allows us to rewrite the performance criteria as:

\small
\begin{align} \label{eqn:perfCrit2}
H^\vpi(t,y,\vx) = \zeta_0 y + \EE_{t,\vx} \biggl[ \int_t^T & \zeta_0 \cdot a(s,\vrho(s),\vpi(s) )  \nonumber \\
& - \tfrac{\zeta_1}{2} \cdot (\vpi(s) - \veta(s))' \mSigma(s) (\vpi(s) - \veta(s))  - \tfrac{\zeta_2}{2} \cdot \vpi(s)' \mQ(s) \vpi(s) ~ds \biggr] \nonumber \\
& \hspace{-3.3cm} = \zeta_0 y + h^\vpi(t,\vx)
\end{align}
\normalsize
where $h^\vpi(t,\vx)$ can be interpreted as the performance criteria for the related stochastic control problem consisting only of the running reward/penalty term appearing in the integral of \eqref{eqn:perfCrit2} above. The \textbf{value function} for this auxiliary stochastic control problem is given by:
\begin{equation} \label{control}
h(t,\vx) = \underset{\vpi \in \As}{\sup} ~ h^\vpi(t,\vx)\;,
\end{equation}
where $\As$ is the set of admissible strategies consisting of all portfolios, which, in our context is defined to be the collection of almost surely bounded vector-valued processes that sum up to 1, and hence all portfolios have finite $\mathbb{L}^2(\Omega \times [0,T])$-norm. Notice that, due to the assumptions made in Section \ref{sec:modelSetup}, the performance criteria \eqref{eqn:perfCrit2} is finite for any portfolio in the admissible set. Assuming the appropriate differentiability conditions, the dynamic programming principle suggests that the value function in (\ref{control}) satisfies the Hamilton-Jacobi-Bellman (HJB) equation:
\begin{equation} \label{eqn:HJB}
\begin{cases}
\partial_t h + \underset{\vpi \in \As}{\sup} \left\{ \LL^\vX h + \zeta_0 \cdot a(t,\rho(t,\vx),\vpi) - \tfrac{\zeta_1}{2} \cdot (\vpi - \eta(t,\vx))' \mSigma(t) (\vpi - \eta(t,\vx)) - \tfrac{\zeta_2}{2} \cdot \vpi' \mQ(t) \vpi \right\} = 0\;,\\
\hspace{14.92cm} h(T,\vx) = 0\;.
\end{cases}
\end{equation}
where $\LL^\vX$ is the infinitesimal generator of the asset value processes. Notice that Assumption \ref{asmp:markov} ensures that the asset values are the only state variables in this PDE. \\

\begin{Proposition} \label{prop:valueFn}
The solution to the HJB equation \eqref{eqn:HJB} is given by:
\begin{equation} \label{valueFn}
h(t,\vx) ~=~ \EE_{t,\vx} \left[ -\int_t^T G(s,\vX(s)) ~ds  \right]  \;,
\end{equation}
where the expectation is taken under the physical measure $\PP$, and the function $G$ is given by:
\begin{align*} \small
G(t,\vx) =  C(t,\vx) + \frac{1}{2} \left[ \frac{1 - \ones' \mA^{-1}(t) \vB(t,\vx)}{\ones' \mA^{-1}(t) \ones} \cdot \ones + \vB(t,\vx)  \right]' \mA^{-1}(t) \left[  \frac{1 - \ones' \mA^{-1}(t) \vB(t,\vx)}{\ones' \mA^{-1}(t) \ones} \cdot \ones - \vB(t,\vx) \right] \;.
\end{align*}
with
\begin{align*}
\underset{\color{red} (n \times n)}{\mA(t)} &=  (\zeta_0 + \zeta_1) \mSigma(t) + \zeta_2 \mQ(t) \;,
 \\
\underset{\color{red} (n \times 1)}{\vB(t,\vx)} &= \zeta_0 \valpha(t) + \zeta_1 \mSigma(t) \eta(t,\vx) \;,
\quad \text{and}
\\
\underset{\color{red} (1 \times 1)}{C(t,\vx)} &= \zeta_0 \cdot \left(\vgamma_{\rho(t,\vx)}(t) + \vdelta_{\rho(t,\vx)}(t) \right) + \tfrac{\zeta_1}{2} \cdot \eta(t,\vx)' \mSigma(t) \eta(t,\vx) \;.
\end{align*}
\end{Proposition}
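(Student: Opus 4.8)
The plan is to first carry out the pointwise maximization inside the HJB operator and then solve the resulting linear PDE by a Feynman--Kac representation. The key observation is that the generator $\LL^\vX$ acts only on the exogenous asset processes and is therefore independent of the control $\vpi$; consequently the supremum in \eqref{eqn:HJB} is a purely static, concave quadratic program in $\vpi$ subject to the budget constraint $\ones'\vpi = 1$. Writing the objective as
\[
f(\vpi) = \zeta_0\, a(t,\rho,\vpi) - \tfrac{\zeta_1}{2}(\vpi-\eta)'\mSigma(t)(\vpi-\eta) - \tfrac{\zeta_2}{2}\vpi'\mQ(t)\vpi\;,
\]
substituting $a(t,\rho,\vpi) = \vpi'\valpha - \tfrac12\vpi'\mSigma\vpi - (\gamma_\vrho + \delta_\vrho)$, and collecting terms by order in $\vpi$, I would show that $f(\vpi) = -\tfrac12\vpi'\mA(t)\vpi + \vpi'\vB(t,\vx) - C(t,\vx)$, with $\mA$, $\vB$ and $C$ exactly as in the statement. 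Since $\mSigma$ and $\mQ$ are positive definite and the $\zeta_i$ are nonnegative (and not all zero), $\mA$ is positive definite, so $f$ is strictly concave and the constrained maximizer is unique.

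Next I would solve the equality-constrained quadratic program by Lagrange multipliers. The stationarity condition $\mA\vpi = \vB + \lambda\ones$ gives $\vpi^\star = \mA^{-1}(\vB + \lambda\ones)$, and imposing $\ones'\vpi^\star = 1$ yields $\lambda = (1-\ones'\mA^{-1}\vB)/(\ones'\mA^{-1}\ones)$, which is well defined since $\mA^{-1}$ is positive definite and hence $\ones'\mA^{-1}\ones > 0$; note also that $\vpi^\star$ is bounded (as $\mA^{-1}$ and $\vB$ are bounded under the standing assumptions) and therefore admissible. Substituting back, and using $\ones'\vpi^\star = 1$ together with $\mA\vpi^\star = \vB + \lambda\ones$, the optimal value collapses to $f(\vpi^\star) = \tfrac12\vB'\mA^{-1}\vB - \tfrac12\lambda^2\,\ones'\mA^{-1}\ones - C$. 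The remaining step is to recognize this as $-G(t,\vx)$: expanding the ``difference of squares'' bracket in the statement, the cross terms cancel by symmetry of $\mA^{-1}$, leaving exactly $\lambda^2\ones'\mA^{-1}\ones - \vB'\mA^{-1}\vB$, so that $G = C + \tfrac12\bigl(\lambda^2\ones'\mA^{-1}\ones - \vB'\mA^{-1}\vB\bigr) = -f(\vpi^\star)$. This algebraic matching is the most delicate bookkeeping, though it is entirely elementary once the multiplier identity $\ones'\mA^{-1}\vB - 1 = -\lambda\,\ones'\mA^{-1}\ones$ is used.

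With the supremum evaluated, the HJB equation \eqref{eqn:HJB} reduces to the linear parabolic terminal-value problem $\partial_t h + \LL^\vX h - G(t,\vx) = 0$ on $[0,T]\times\RR^n_+$ with $h(T,\vx)=0$. Because the coefficients of $\vX$ together with $\valpha$, $\mSigma$, $\mQ$, $\eta$ and $\rho$ are bounded, and $\mA^{-1}$ is uniformly bounded by nondegeneracy, the integrand $G$ is bounded, so a Feynman--Kac argument yields the representation $h(t,\vx) = \EE_{t,\vx}\bigl[-\int_t^T G(s,\vX(s))\,ds\bigr]$, which is the claimed formula. I expect the regularity/verification step, rather than the optimization, to be the main obstacle: one must confirm that this candidate has the smoothness (e.g.\ $h \in C^{1,2}$ with suitable growth) needed for the stochastic representation to be valid, leaning on the differentiability assumptions on the model coefficients and on $\rho,\eta$; alternatively one verifies directly that the stated expression solves the linear PDE and closes the argument via a martingale (verification) lemma.
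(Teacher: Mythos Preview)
Your proposal is correct and follows essentially the same route as the paper: rewrite the running term as a concave quadratic in $\vpi$, solve the budget-constrained optimization via a Lagrange multiplier to identify $\vpi^\star$ and $f(\vpi^\star)=-G$, reduce the HJB to the linear terminal-value problem $(\partial_t+\LL^\vX)h-G=0$, and invoke Feynman--Kac. The paper's proof differs only in that it spells out the regularity step you flagged as the main obstacle, verifying uniform ellipticity, boundedness and H\"older continuity of the coefficients and of $G$ (via the standing assumptions and eigenvalue bounds on $\mA^{-1}$) so as to apply the classical existence/representation theorem in \cite{karatzas2012}.
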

\begin{proof}
See Appendix \ref{proof:valueFn}.
\end{proof}

Next, we present one of the key results of the paper: a verification theorem showing that the candidate solution provided in Proposition \ref{prop:valueFn} coincides with the value function. As well, we present the explicit form of the optimal weights.

\begin{Theorem} \label{thm:optCont}
The candidate value function given in Proposition \ref{prop:valueFn} is indeed the solution to the stochastic control problem \eqref{control}. Moreover, the optimal portfolio is given by:
\begin{subequations}
\label{eqn:opt-pi-A-and-B}
\begin{equation} \label{eqn:optCont}
\vpi_\vzeta^*(t) = \mA^{-1}(t) \cdot \left[ \frac{1 - \ones' \mA^{-1}(t) \cdot \vB(t,\vX(t))}{\ones' \mA^{-1}(t) \ones} \cdot \ones + \vB(t,\vX(t)) \right]\;,
\end{equation}
where
{\small
\begin{equation}
\underset{\color{red} (n \times n)}{\mA(t)} =  (\zeta_0 + \zeta_1) \mSigma(t) + \zeta_2 \mQ(t)\;, \quad
\text{and} \quad
\underset{\color{red} (n \times 1)}{\vB(t,\vx)} = \zeta_0 \valpha(t) + \zeta_1 \mSigma(t) \eta(t,\vx) \;.
\end{equation} \vspace{0.05cm}
}
\end{subequations}
\end{Theorem}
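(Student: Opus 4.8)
The plan is to treat this as a standard verification argument built directly on the HJB equation \eqref{eqn:HJB}. The argument splits into two halves: (i) identify the candidate optimizer $\vpi^*$ by carrying out the pointwise (static) maximization inside the supremum in \eqref{eqn:HJB}, and (ii) verify, via an It\^o/martingale argument, that the candidate $h$ from Proposition \ref{prop:valueFn} dominates $h^\vpi$ for every admissible $\vpi$ and is attained at $\vpi^*$, so that it coincides with the value function \eqref{control}.

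For the pointwise maximization I would first substitute the explicit form $a(t,\rho,\vpi) = \vpi'\valpha - \tfrac12\vpi'\mSigma\vpi - (\gamma_\vrho+\delta_\vrho)$ into the bracketed term of \eqref{eqn:HJB} and collect powers of $\vpi$. Since $\LL^\vX h$ does not depend on $\vpi$, the objective reduces to the quadratic form $-\tfrac12\vpi'\mA\vpi + \vpi'\vB - C$, with $\mA,\vB,C$ exactly as in Proposition \ref{prop:valueFn}. Because $\mSigma$ and $\mQ$ are uniformly positive definite, $\mA$ is positive definite, so this form is strictly concave and admits a unique maximizer on the affine budget set $\{\vpi:\ones'\vpi=1\}$. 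I would obtain it using a single Lagrange multiplier $\lambda$ for the constraint, solving $-\mA\vpi+\vB-\lambda\ones=0$ to get $\vpi=\mA^{-1}(\vB-\lambda\ones)$, and then fixing $\lambda$ from $\ones'\vpi=1$; this yields precisely \eqref{eqn:optCont}. Substituting $\vpi^*$ back reproduces the integrand $G$ of Proposition \ref{prop:valueFn}, confirming internal consistency.

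For the verification step I would take the candidate $h$ from \eqref{valueFn}, which (via Proposition \ref{prop:valueFn}) I assume is $C^{1,2}$ and solves \eqref{eqn:HJB}, and apply It\^o's formula to $s\mapsto h(s,\vX(s))$ on $[t,T]$. Using the terminal condition $h(T,\cdot)=0$ and the HJB inequality $\partial_t h + \LL^\vX h \le -\{\text{running reward for }\vpi\}$, valid for every admissible $\vpi$ with equality at $\vpi^*$, and taking $\EE_{t,\vx}$ so that the $d\vW$-integral drops out, I would obtain $h^\vpi(t,\vx)\le h(t,\vx)$ for all $\vpi\in\As$ and $h^{\vpi^*}(t,\vx)=h(t,\vx)$. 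It then remains to check that $\vpi^*$ is admissible: it is $\mathfrak{F}$-adapted and Markovian in $\vX$ by Assumption \ref{asmp:markov}, satisfies $\ones'\vpi^*=1$ by construction, and is bounded $\PP$-a.s.\ because $\valpha,\eta,\mSigma,\mQ$ are bounded and $\mA^{-1}$ is uniformly bounded (as $\mA$ is uniformly positive definite).

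The delicate step is justifying that the $d\vW$-stochastic integral in the It\^o expansion is a true (mean-zero) martingale rather than merely a local one, so that it genuinely vanishes under $\EE_{t,\vx}$; this requires square-integrability of the diffusion coefficient of $h(\cdot,\vX(\cdot))$, hence control on the spatial growth of $\nabla_\vx h$ together with the moments of the log-normal-type process $\vX$. I would address this by establishing, from the Feynman--Kac representation \eqref{valueFn} and the boundedness of the coefficients, that $h$ and $\nabla_\vx h$ have at most polynomial growth, and then invoke the finite moments of $\vX$ guaranteed by Assumption \ref{asmp:1} and the bounded-variance assumption. A standard localization by stopping times $\tau_m$, followed by dominated convergence as $m\to\infty$, makes this rigorous and simultaneously discharges the $C^{1,2}$ and integrability technicalities; this localization is where the bulk of the real work lies.
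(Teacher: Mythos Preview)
Your proposal is correct and follows the standard verification template, but the paper takes a slightly different (and shorter) route. Rather than carrying out the It\^o expansion of $h(s,\vX(s))$ and handling the local-martingale issue by localization and polynomial-growth estimates on $\nabla_\vx h$, the paper simply invokes a ready-made verification theorem (Theorem~5.1 in \cite{touzi2012}) and checks its hypotheses: that $h$ is a classical solution of the HJB equation (already established in Proposition~\ref{prop:valueFn}), that $\vpi^*_\vzeta$ is the pointwise maximizer of $F(t,\vx,\vpi)$, that $\vpi^*_\vzeta$ is admissible (adapted, sums to one, bounded---exactly as you argue), and that the controlled SDE for $Y^{\vpi^*_\vzeta,\vrho}$ admits a unique solution, which follows because the drift and diffusion coefficients are bounded and trivially Lipschitz in $Y$ (they do not depend on $Y$). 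Your direct approach is more self-contained and makes the martingale/integrability issues explicit, at the cost of the extra localization work you correctly flag as the ``delicate step''; the paper's approach outsources precisely that work to the cited reference, trading self-containment for brevity. Either is acceptable, and your identification of the admissibility check and the need to control the stochastic integral is exactly on point.
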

\begin{proof}
See Appendix \ref{proof:optCont}.
\end{proof}

If an investor is faced with the market portfolio as their performance and tracking benchmark, as is the case with the vast majority of active management mandates, then the optimal strategy above can be applied by replacing $\eta$ with the function $\mu:\RR^n_+ \rightarrow \Delta^n$:
\begin{equation} \label{eqn:muFunction}
\mu(\vX) = \left( \frac{X_1}{\sum_{i=1}^n X_i}, ..., \frac{X_n}{\sum_{i=1}^n X_i} \right) \;,
\end{equation}
where $\Delta^n = \left\{ \vmu \in \RR^n : \mu_i \in (0,1), i=1,...,n ; ~\vmu'\ones = 1 \right\}$ is the standard $n$-dimensional simplex.

Before discussing the properties of the optimal portfolio in \eqref{eqn:opt-pi-A-and-B}, we first present two portfolios that are of particular importance: the growth optimal portfolio (GOP) and the minimum quadratic variation portfolio (MQP). As the names suggest, the GOP is the portfolio with maximal expected growth over any time horizon while the MQP is the portfolio with the smallest quadratic variation of all portfolios over the investment horizon. Alternatively, the GOP corresponds to the optimal portfolio for an investor that wishes to maximize the expected utility of terminal wealth when their utility function is logarithmic. The GOP plays an important role in financial theory for a variety of reasons; see e.g., \cite{christensen2012} for a comprehensive literature review. In certain limiting cases, the solution to our stochastic control problem can be written in terms of the GOP and MQP. Formally, the GOP and MQP are the solutions to the following optimization problems:

\vspace{-0.6cm}

\begin{align} \label{GOPproblem}
\vpi_{GOP} &= \underset{\pi \in \As}{\arg \sup} ~ \EE \left[ \log \left( \frac{Z_\vpi(T)}{Z_\vpi(0)} \right) \right] \;, \\
\vpi_{MQP} &= \underset{\pi \in \As}{\arg \inf} ~~ \EE \left[ \int_0^T \vpi' \mSigma(t) \vpi ~dt \right] \;.
\end{align}

\vspace{0.5cm}

\begin{Corollary} \label{cor:GOPMQP}
The growth optimal portfolio is given by:
\begin{equation} \label{eqn:GOP}
\vpi_{GOP}(t) = \left( 1 - \ones' \mSigma^{-1}(t) \valpha(t) \right) \cdot \vpi_{MQP}(t) + \mSigma^{-1}(t) \valpha(t) \;.
\end{equation}
where $\vpi_{MQP}$ is the minimum quadratic variation portfolio given by:
\begin{equation} \label{eqn:MQP}
\vpi_{MQP}(t) = \frac{1}{\ones' \mSigma^{-1}(t) \ones} \cdot \mSigma^{-1}(t) \ones \;.
\end{equation}
\end{Corollary}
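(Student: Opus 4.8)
The plan is to exploit the fact that both problems are \emph{myopic}: once expectations are taken, each objective becomes a time integral whose integrand depends on the portfolio only through its current value $\vpi(t)$, while the sole constraint $\vpi(t)'\ones = 1$ is also imposed pointwise in $t$. Extremizing the integral therefore reduces to solving a constrained quadratic program at each fixed $t$, which I would handle with a single Lagrange multiplier. This is also precisely why the statement is a corollary of Theorem \ref{thm:optCont}: both $\vpi_{GOP}$ and $\vpi_{MQP}$ are recovered as special parameterizations of \eqref{eqn:optCont}, so in each case one may alternatively just read off the optimal weights.

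For the MQP, the objective is already $\EE\bigl[\int_0^T \vpi'\mSigma(t)\vpi\,dt\bigr]$, so I would minimize $\vpi'\mSigma\vpi$ subject to $\vpi'\ones = 1$ pointwise. Stationarity of the Lagrangian gives $\mSigma\vpi = \tfrac{\lambda}{2}\ones$, hence $\vpi = \tfrac{\lambda}{2}\mSigma^{-1}\ones$, and enforcing the budget constraint fixes $\tfrac{\lambda}{2} = 1/(\ones'\mSigma^{-1}\ones)$, which is exactly \eqref{eqn:MQP}. Equivalently, this is Theorem \ref{thm:optCont} with $\zeta_0 = \zeta_1 = 0$, $\zeta_2 = 1$ and $\mQ = \mSigma$ (admissible, since $\mSigma$ already satisfies the bounds required of $\mQ$): then $\mA = \mSigma$ and $\vB = 0$, and \eqref{eqn:optCont} collapses to \eqref{eqn:MQP}.

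For the GOP, I would first use the portfolio SDE \eqref{portSDE} to write $\log\bigl(Z_\vpi(T)/Z_\vpi(0)\bigr) = \int_0^T (\gamma_\vpi + \delta_\vpi)\,dt + \int_0^T \vxi_\vpi'\,d\vW$; taking expectations kills the stochastic integral (its integrand is square-integrable under the standing assumptions), leaving $\EE\bigl[\int_0^T (\gamma_\vpi + \delta_\vpi)\,dt\bigr]$. Inserting the definitions of $\gamma_\vpi$, $\gamma^*_\vpi$ and $\delta_\vpi$ and using $\valpha = \vgamma + \vdelta + \tfrac12 \mbox{diag}(\mSigma)$ collapses the integrand to $\vpi'\valpha - \tfrac12 \vpi'\mSigma\vpi$ --- exactly the $\vpi$-dependent part of $a$ in Proposition \ref{prop:Y_SDE}. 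Maximizing this subject to $\vpi'\ones = 1$, the first-order condition $\valpha - \mSigma\vpi - \lambda\ones = 0$ gives $\vpi = \mSigma^{-1}(\valpha - \lambda\ones)$, and the budget constraint fixes $\lambda = (\ones'\mSigma^{-1}\valpha - 1)/(\ones'\mSigma^{-1}\ones)$, so $\vpi_{GOP} = \mSigma^{-1}\valpha + \tfrac{1 - \ones'\mSigma^{-1}\valpha}{\ones'\mSigma^{-1}\ones}\mSigma^{-1}\ones$. Substituting $\mSigma^{-1}\ones = (\ones'\mSigma^{-1}\ones)\,\vpi_{MQP}$ turns the second term into $(1 - \ones'\mSigma^{-1}\valpha)\,\vpi_{MQP}$, giving \eqref{eqn:GOP}. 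This also matches Theorem \ref{thm:optCont} with $\zeta_0 = 1$, $\zeta_1 = \zeta_2 = 0$, because $\EE[Y^{\vpi,\vrho}(T)]$ differs from $\EE[\log Z_\vpi(T)]$ only by a term independent of $\vpi$.

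The algebra above is routine; the step requiring care is the reduction to pointwise optimization, i.e. arguing that the pointwise extremizer actually extremizes the integral. Because for each $t$ the integrand is a strictly convex (for the MQP) resp. strictly concave (for the GOP) quadratic in $\vpi(t)$ with a single affine constraint, the pointwise optimizer is unique and is uniformly bounded in $t$ --- $\mSigma^{-1}$ is uniformly bounded by the nondegeneracy assumption and $\valpha$ is bounded by Assumption \ref{asmp:1} --- hence admissible. Substituting it therefore attains the integrand's optimum simultaneously at every $t$, which optimizes the integral. This, together with the vanishing of the martingale term for the GOP, is the only non-mechanical part of the proof.
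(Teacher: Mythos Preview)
Your proposal is correct. The paper's own proof is a one-liner: it simply specializes Theorem~\ref{thm:optCont}, taking $\zeta_0=\zeta_1=0$, $\mQ=\mSigma$ for the MQP and $\zeta_1=\zeta_2=0$ for the GOP, and reads off \eqref{eqn:MQP} and \eqref{eqn:GOP} from \eqref{eqn:optCont}. You include exactly this specialization, so your argument subsumes the paper's. What you add is the self-contained pointwise Lagrangian derivation, together with the justification that the stochastic integral vanishes in expectation and that the pointwise optimizer is admissible and therefore optimizes the time integral. This extra work is not needed once Theorem~\ref{thm:optCont} is in hand, but it has the merit of making the corollary independent of the full control-theoretic machinery; either route is fine.
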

\begin{proof}
Use the optimal control given in Theorem \ref{thm:optCont} with $\zeta_0 = \zeta_1 = 0$ and $\mQ = \mSigma$ to obtain the MQP, and with $\zeta_1 = \zeta_2 = 0$ to obtain the GOP. \hfill
\end{proof} \\

\begin{Remark}
\cite{oderda2015} refers to the MQP as the global minimum variance (GMV) portfolio. However, this is a misnomer as the latter term is typically reserved for the portfolio that minimizes the variance of terminal wealth, $\VV[Z_\pi(T)]$, for a given investment horizon. Though it should be noted that in a discrete-time, single-period model where the covariance matrix of returns is given by $\mSigma$, the GMV is in fact equal to $\frac{1}{\ones' \mSigma^{-1} \ones} \cdot \mSigma^{-1}\ones$. \\
\end{Remark}

Armed with the definitions of the GOP and MQP, we present the following corollary which summarizes some of the properties of the optimal strategy given in \eqref{eqn:opt-pi-A-and-B}, as well as its relation to the aforementioned portfolios. \\

\begin{Corollary} \label{cor:properties}
The optimal portfolio given in Theorem \ref{thm:optCont} satisfies:
\begin{enumerate}[label=(\roman*)]
\item The limiting portfolios as the preference parameters tend to infinity are given by:
\begin{align*}
\underset{\zeta_0 \to \infty}{\lim} \vpi_\vzeta^*(t) &= \vpi_{GOP}(t) \\
&= \vpi_\vzeta^*(t) \text{ with } \vzeta = (\zeta_0,0,0) \qquad  \text{(no running penalties)} \\
\underset{\zeta_1 \to \infty}{\lim} \vpi_\vzeta^*(t) & = \veta(t) \\
&= \vpi_\vzeta^*(t) \text{ with } \vzeta = (0,\zeta_1,0) \qquad  \text{(tracking penalty only)} \\
\underset{\zeta_2 \to \infty}{\lim} \vpi_\vzeta^*(t) &= \frac{1}{\ones' \mQ^{-1}(t) \ones} \cdot \mQ^{-1}(t) \ones \\
&= \vpi_\vzeta^*(t)  \text{ with } \vzeta = (0,0,\zeta_2)  \qquad \text{(absolute penalty only)}
\end{align*}
\item When $\zeta_0, \zeta_1 > 0$ and $\zeta_2 = 0$ (no absolute penalty):
\[ \vpi_\vzeta^*(t) = c \cdot \vpi_{GOP}(t) + (1-c) \cdot \veta(t) \]
where $c = \frac{\zeta_0}{\zeta_0 + \zeta_1} \in (0,1)$ is a fixed constant.
\item When $\mQ(t) = \mSigma(t)$ (minimize relative and absolute risk):
\[ \vpi_\vzeta^*(t) = c_1 \cdot \vpi_{GOP}(t) + c_2 \cdot \veta(t) + c_3 \cdot  \vpi_{MQP}(t)  \]
where $c_i = \frac{\zeta_i}{\zeta_0 + \zeta_1 + \zeta_2} \in (0,1)$ for $i = 1,2,3$ are fixed constants summing to 1.
\end{enumerate}
\end{Corollary}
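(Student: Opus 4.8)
The plan is to derive all three parts directly from the closed-form optimal control $\vpi_\vzeta^*$ in Theorem \ref{thm:optCont}; no further optimization is needed once the optimal portfolio is in hand, so everything reduces to algebra. Two elementary facts will do all the work. The first is a scale-invariance property of the formula \eqref{eqn:optCont}: for any $\lambda>0$, replacing $(\mA,\vB)$ by $(\lambda\mA,\lambda\vB)$ leaves $\vpi_\vzeta^*$ unchanged. Indeed, $\mA^{-1}$ picks up a factor $\lambda^{-1}$, the bilinear form $\ones'\mA^{-1}\vB$ is invariant, and $\ones'\mA^{-1}\ones$ scales by $\lambda^{-1}$; tracing these through the bracket shows the right-hand side is homogeneous of degree zero in the pair $(\mA,\vB)$. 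The second fact is that the tracking portfolio satisfies $\ones'\veta=1$, together with the identities defining $\vpi_{GOP}$ and $\vpi_{MQP}$ in Corollary \ref{cor:GOPMQP}.

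For part (i) I would exploit scale-invariance to handle the limits and the claimed equalities at once. Setting $\vzeta=(\zeta_0,0,0)$ gives $\mA=\zeta_0\mSigma$ and $\vB=\zeta_0\valpha$, so by homogeneity $\vpi_\vzeta^*$ equals the portfolio obtained from $(\mSigma,\valpha)$ for every $\zeta_0>0$; a one-line substitution identifies this with $\vpi_{GOP}$ via \eqref{eqn:GOP}, establishing both the limit and the equality with the no-running-penalty case simultaneously. The remaining two limits are identical in spirit: dividing $\mA$ and $\vB$ by $\zeta_1$ (resp. $\zeta_2$) and letting the parameter tend to infinity reduces $(\mA,\vB)$ to $(\mSigma,\mSigma\veta)$ (resp. $(\mQ,0)$). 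For the first, $\ones'\mSigma^{-1}(\mSigma\veta)=\ones'\veta=1$ makes the scalar coefficient of $\ones$ vanish, leaving $\vpi_\vzeta^*\to\veta$; for the second, $\vB\to 0$ collapses the formula to $\mQ^{-1}\ones/(\ones'\mQ^{-1}\ones)$. In each case the corresponding pure-parameter equality is again exact rather than merely asymptotic, by the same homogeneity argument.

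For parts (ii) and (iii) the key simplification is that the hypotheses make $\mA$ a scalar multiple of $\mSigma$: when $\zeta_2=0$ we have $\mA=(\zeta_0+\zeta_1)\mSigma$, and when $\mQ=\mSigma$ we have $\mA=(\zeta_0+\zeta_1+\zeta_2)\mSigma$. Writing $S$ for the relevant sum, $\mA^{-1}=S^{-1}\mSigma^{-1}$, and I would substitute this into \eqref{eqn:optCont}, expand using $\vB=\zeta_0\valpha+\zeta_1\mSigma\veta$, and apply $\ones'\veta=1$ to simplify the scalar numerator. Collecting terms and recognizing $\mSigma^{-1}\ones/(\ones'\mSigma^{-1}\ones)=\vpi_{MQP}$ together with $(1-\ones'\mSigma^{-1}\valpha)\vpi_{MQP}+\mSigma^{-1}\valpha=\vpi_{GOP}$ yields $\vpi_\vzeta^*=(\zeta_0\,\vpi_{GOP}+\zeta_1\,\veta+\zeta_2\,\vpi_{MQP})/S$, which is exactly the claimed convex combination; specializing to $\zeta_2=0$ recovers part (ii) with $c=\zeta_0/(\zeta_0+\zeta_1)$.

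The computations are routine, so the only point requiring care is the passage to the limit in part (i): one must check that the map $(\mA,\vB)\mapsto\vpi_\vzeta^*$ is continuous at the limiting arguments. This holds because the limiting $\mA$ is either $\mSigma$ or $\mQ$, both positive-definite by assumption, so $\mA^{-1}$ exists and $\ones'\mA^{-1}\ones>0$ throughout a neighborhood; hence no denominator degenerates and the limits may be taken inside the formula. I expect this continuity bookkeeping, rather than any part of the algebra, to be the main and still minor obstacle.
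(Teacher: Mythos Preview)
Your proposal is correct and follows essentially the same route as the paper: both arguments factor the dominant parameter out of $\mA$ (what you call scale-invariance/homogeneity, the paper writes explicitly as $\mA=(\zeta_0+\zeta_1)\mA_1$ or $\mA=\zeta_2\mA_2$) and then take limits or substitute directly, with parts (ii)--(iii) handled by recognizing $\mA$ as a scalar multiple of $\mSigma$. Your framing via homogeneity of degree zero in $(\mA,\vB)$ and your explicit continuity remark are nice touches the paper leaves implicit, but the underlying mechanics are the same.
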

\begin{proof}
See Appendix \ref{proof:properties}.
\end{proof}

A few points are worth mentioning at this juncture:
\begin{enumerate}	
	\item The limiting portfolios in the corollary above show that as the emphasis on outperformance increases, the optimal portfolio tends to the GOP. This is a consequence of the GOP having the highest expected growth rate, and hence delivering the highest level of (expected) outperformance. Similarly, as the aversion to tracking error increases, the optimal portfolio tends to the tracking benchmark. This reflects the fact that the investor is heavily penalizing any deviation from the tracking benchmark and decides to hold the latter to avoid any running penalties. Finally, when the third aversion parameter tends to infinity, the investor holds the portfolio that minimizes the absolute running penalty at each point in time. These three limiting cases are equivalent to investors that are \textit{only} interested in outperformance, tracking or reducing an absolute running penalty, which lead to the expected result of holding the GOP, the benchmark itself or the absolute penalty minimizer, respectively.

    \item In the absence of the absolute running penalty, the optimal strategy consists of holding a constant proportion in the tracking benchmark and the remainder in the GOP. The idea is to benefit from the expected growth rate of the GOP while modulating its high levels of active risk by investing in the tracking benchmark. The proportions are determined by the relative importance the investor places on outperformance versus tracking. The fact that a \textit{constant} proportion of wealth is invested in each portfolio is related to the fact that log-utility falls under the class of constant relative risk aversion (CRRA) utility functions, wherein the fraction of wealth invested in risky assets is independent of the level of wealth. Since we are concerned with \textit{active} risk relative to the tracking benchmark in this case, the tracking benchmark plays the role of the risk-free asset.
	
	\item When the investor wishes to minimize both relative and absolute risk, their optimal strategy is to hold constant proportions in the GOP, MQP and tracking benchmark. These holdings contribute to increasing growth, decreasing absolute risk and decreasing active risk, respectively.

    \item The optimal solution does not depend on the performance benchmark $\vrho$. The reason for this is simple: the investor wishes to maximize the difference between the expected growth rates of their portfolio and the performance benchmark. Regardless of the choice of $\vrho$, the way this difference can be maximized is by choosing the portfolio with the highest possible growth rate, i.e. the GOP. More fundamentally, $\vrho$ does not appear since we have that $\log\left(\frac{Z_\vpi(T)}{Z_\vrho(T)}\right) = \log(Z_\vpi(T))-\log(Z_\vrho(T))$ and the second term does not involve the control and hence can be left out of the performance criteria \eqref{eqn:perfCrit2} altogether. This is unlikely to be the case if alternative utility functions are adopted.

     \item The optimal portfolio is myopic, i.e. independent of the investment horizon $T$. This is expected as the GOP can be shown to maximize expected growth at \textit{any} time horizon. This property is also a typical of logarithmic utility functions.

   	\item When the investor is not interested in outperformance, i.e. $\zeta_0 = 0$, the optimal strategy does not require knowledge/estimation of the assets' growth rates $\vgamma$, which can be quite difficult to estimate robustly. Estimation of the covariance matrix, $\mSigma$, is still needed. $\zeta_0$ can also be used to reflect the investor's confidence in their growth rate estimates: a small value of $\zeta_0$ (relative to $\zeta_1$ and $\zeta_2$) is akin to a small allocation in the growth optimal portfolio (relative to the other subportfolios). Furthermore, depending on the nature of the performance benchmark, outperformance can be achieved in an indirect manner. When that benchmark is the market portfolio, as is often the case for active managers, the tools of SPT provide a wide range of functionally generated portfolios that constitute relative arbitrages with respect to the market portfolio and do not require the estimation of asset growth rates; see \cite{fernholz2002} for more details. The investor can simply track one of these portfolios setting $\zeta_0 = 0$ while imposing their own absolute penalty term via the matrix $\mQ(t)$.
   	
   	\item The absolute penalty term forces the optimal strategy towards the ``shrinkage'' portfolio given by $\frac{1}{\ones' \mQ^{-1}(t) \ones} \cdot \mQ^{-1}(t) \ones$. When $\mQ$ is a diagonal matrix $\mQ(t) = \text{diag}\left(w_1(t),...,w_n(t) \right)$, then $\mQ^{-1}(t) = \text{diag}\left(\tfrac{1}{w_1(t)},...,\tfrac{1}{w_n(t)} \right)$. From this, we see that the absolute penalty term forces us to shrink to a portfolio proportional to $\left(\tfrac{1}{w_1(t)},...,\tfrac{1}{w_n(t)} \right)$; when $w_i$ is large $1/w_i$ is small and the shrinkage portfolio allocates less capital to asset $i$. This can be used to tilt away from undesired assets; taking $w_i \rightarrow \infty$ forces the allocation in asset $i$ to zero. Additionally, taking $\mQ = \mI$ penalizes large positions in any asset by shrinking to the equal-weight portfolio, while setting $w_i(t) = \mSigma_{ii}(t)$ forces shrinkage to the risk parity portfolio. 	
   	
\end{enumerate}

The points made above highlight the motivation for including the two running penalties. In principle, when an investor's goal is simply to outperform a performance benchmark, they would hold the GOP to maximize their expected growth. However, it is well-documented that the GOP is associated with very large levels of risk (in terms of portfolio variance) as well as potentially large short positions in a number of assets. Adding the relative and absolute penalty terms mitigates some of this risk. It is also worth noting that the decompositions in Corollary 3 (ii) and (iii) can help guide the choice of the subjective parameters $\zeta_i$ which can be used to express the proportion of wealth the investor wishes to place in the GOP, tracking benchmark and MQP. The simulation results provided later can also give a rough idea of the relative and absolute risk-return characteristics of optimal portfolios for different choices of $\vzeta$ to further refine the investor's decision.

Finally, we can interpret the optimal portfolio as being the growth optimal portfolio for an alternative market:

\begin{Corollary} \label{cor:modifiedSigma}
The optimal portfolio $\vpi^*_\vzeta$ is the growth optimal portfolio for a market with a modified rate of return process $\valpha^*$ and a modified covariance process $\mSigma^*$ given by:
\begin{subequations}
\begin{align}
\valpha^*(t) &= \zeta_0 \valpha(t) + \zeta_1 \mSigma(t) \eta(t,\vX(t))
\\
\mSigma^*(t) &= (\zeta_0 + \zeta_1) \mSigma(t) + \zeta_2 \mQ(t)
\end{align}
\end{subequations}
\end{Corollary}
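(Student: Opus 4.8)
The plan is to recognize that the modified market parameters $\mSigma^*$ and $\valpha^*$ are literally the matrices $\mA$ and $\vB$ that already appear in Theorem \ref{thm:optCont}, and then to apply the growth-optimal formula of Corollary \ref{cor:GOPMQP} to this modified market. Comparing definitions directly, $\mSigma^*(t) = (\zeta_0+\zeta_1)\mSigma(t) + \zeta_2\mQ(t) = \mA(t)$ and $\valpha^*(t) = \zeta_0\valpha(t) + \zeta_1\mSigma(t)\eta(t,\vX(t)) = \vB(t,\vX(t))$, so the claim reduces to recognizing one closed-form expression inside another.

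First I would confirm that the modified market is admissible in the sense of Section \ref{sec:modelSetup}, so that Corollary \ref{cor:GOPMQP} genuinely applies to it. Symmetry of $\mSigma^*$ is immediate from symmetry of $\mSigma$ and $\mQ$. Positive-definiteness follows from the lower bounds in the nondegeneracy assumption and the assumption on $\mQ$: for any nonzero $\mathbf{x}$,
\begin{equation*}
\mathbf{x}'\mSigma^*(t)\mathbf{x} = (\zeta_0+\zeta_1)\,\mathbf{x}'\mSigma(t)\mathbf{x} + \zeta_2\,\mathbf{x}'\mQ(t)\mathbf{x} \geq \big((\zeta_0+\zeta_1)\varepsilon_1 + \zeta_2\varepsilon_2\big)\|\mathbf{x}\|^2 > 0\,,
\end{equation*}
which holds in every regime where $\mA$ is invertible, so that $(\mSigma^*)^{-1}$ exists.

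Next I would write down the growth-optimal portfolio for the modified market by substituting $\mSigma \mapsto \mSigma^*$ and $\valpha \mapsto \valpha^*$ into Corollary \ref{cor:GOPMQP}. Inserting the modified minimum-variance portfolio $\vpi_{MQP}^*(t) = (\ones'(\mSigma^*)^{-1}\ones)^{-1}(\mSigma^*)^{-1}\ones$ into the GOP formula gives
\begin{equation*}
\vpi_{GOP}^*(t) = \frac{1 - \ones'(\mSigma^*)^{-1}\valpha^*}{\ones'(\mSigma^*)^{-1}\ones}\,(\mSigma^*)^{-1}\ones + (\mSigma^*)^{-1}\valpha^*\,.
\end{equation*}
Factoring $(\mSigma^*)^{-1}$ to the left reproduces exactly the bracketed expression of \eqref{eqn:optCont} under the identification $\mA = \mSigma^*$, $\vB = \valpha^*$, whence $\vpi_{GOP}^* = \vpi_\vzeta^*$.

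I do not anticipate a genuine obstacle here: once the substitution is made the two formulas coincide term by term, so the statement is an algebraic identity rather than a new optimization problem. The only step needing a modicum of care is the admissibility check on $\mSigma^*$ above, which guarantees that the GOP formula of Corollary \ref{cor:GOPMQP} may legitimately be invoked for the modified market.
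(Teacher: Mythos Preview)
Your proposal is correct and follows essentially the same approach as the paper: the paper's proof consists of a single sentence stating that the result follows from direct comparison of the optimal portfolio in Theorem \ref{thm:optCont} with the GOP formula in Corollary \ref{cor:GOPMQP}, which is precisely the identification $\mSigma^* = \mA$, $\valpha^* = \vB$ that you carry out explicitly. Your additional admissibility check on $\mSigma^*$ is a welcome bit of rigor that the paper leaves implicit.
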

\begin{proof}
The result follows from direct comparison of the optimal portfolio in Theorem \ref{thm:optCont} with the form of the growth optimal portfolio in Corollary \ref{cor:GOPMQP}.
\end{proof}

The last corollary has an interesting interpretation. Consider first the adjustment term added to the rate of return process, $\zeta_1 \mSigma(t) \eta(t,\vX(t))$. It is straightforward to show that the elements of this vector constitute the quadratic covariation between between each asset and the wealth process associated with the tracking benchmark $\veta$, namely $\langle \log X_i, \log Z_\veta \rangle$; see Section 1.2 of \cite{fernholz2002}. This implies that the investor is modifying the assets' rates of return to reward those assets that are more closely correlated with the portfolio they are trying to track, and these artificial rewards are proportional to the investor's desire to track $\veta$ represented by $\zeta_1$. Moreover, if we consider the case where $\mQ(t)$ is a diagonal matrix, the modification to the covariance matrix amounts to increasing the variance of each asset according to the corresponding diagonal entry of $\mQ(t)$. This in turn makes certain assets less desirable and is tied to the notion of tilting away from those assets as discussed earlier in this section.

\section{Implementation}

In this section we use Fama-French industry data\footnote{~ Source: \url{http://mba.tuck.dartmouth.edu/pages/faculty/ken.french/data_library.html} } to simulate the optimal portfolio (OP) given by \eqref{eqn:opt-pi-A-and-B} for different values of $\vzeta$. We use the simulations to gain an understanding of the portfolios' features and risk-return profiles compared to the other portfolios, namely: the growth optimal portfolio (GOP) given by \eqref{eqn:GOP} and the minimum quadratic variance portfolio (MQP) given by \eqref{eqn:MQP}, as well as the optimal solution derived in \cite{oderda2015} which we will refer to as the maximal drift portfolio (MDP).

\subsection{Data}

The data consists of monthly returns (with and without dividends) for five industries (Consumer, Manufacturing, Hi-Tech, Health, and Other) for the period January 2005 to July 2017. Industry portfolios are constructed by assigning each stock to one of the five groupings listed above according to their SIC code at the time. We then treat the resulting industries as the constituents of our market, i.e. our market consists of five assets that are the industries themselves. The data also contains the number of firms and the average market capitalization of firms for each industry, which allows us to compute a time series of market weights for the five industries.

We work with industry returns rather than individual securities to avoid difficulties that arise from varying investment sets caused by individual companies entering and exiting the market and we restrict ourselves to a market of five industries for computational simplicity and ease of displaying and interpreting results. In our experiments, we found that simulation results are qualitatively similar for various granularity levels, though it should be noted that parameter estimation, particularly of the growth rates, becomes less reliable and robust as the number of assets increases.

\subsection{Parameter Estimation}

For simplicity, we assume that the parameters $\vgamma, \vdelta$ and $\vxi$ are constants and we use the entirety of the dataset for the estimation. Other estimation methods that allow for time-varying parameters are likely more appropriate, however this is beyond the scope of the current work. The parameter estimation procedure involves estimating:
\begin{enumerate}
\item the covariance matrix, $\mSigma$, estimated by the sample covariance of changes in asset log-values
\item the matrix of volatilities, $\vxi$, which is given as the Cholesky decomposition of $\mSigma$
\item the growth rate, $\vgamma$, estimated as the sample mean of changes in asset log-values (ex-dividends)
\item the dividend growth rate, $\vdelta$, estimated as the sample mean of changes in asset log-values (with dividends) minus $\vgamma$.
\end{enumerate}

A summary of the estimated parameters is given in Table \ref{tab:paramsTable} below.

\begin{table}[h] \footnotesize \centering
	\begin{tabular}{ l c c c }
		\toprule[1.5pt]
		\bf{Industry}    				& \bf{Growth}   & \bf{Dividends} &  \bf{Std. Dev.}   \\
		\midrule	\bf{Cnsmr}  		& 7.4 \%  & 2.1 \%  & 12 \%  \\					
					\bf{Manuf}      	& 5.9 \%  & 2.3 \%  & 16 \%  \\
					\bf{HiTec}  		& 8.6 \%  & 1.5 \%  & 16 \%  \\
					\bf{Hlth}  			& 8.3 \%  & 1.9 \%  & 13 \%  \\
					\bf{Other}  		& 3.2 \%  & 1.8 \%  & 19 \%  \\
		\bottomrule[1.5pt]					
	\end{tabular}  ~~~~~
	\begin{tabular}{ l c c c c c }
		\toprule[1.5pt]
					   &  \bf{Cnsmr}   &   \bf{Manuf}  &  \bf{HiTec}  &  \bf{Hlth}  &  \bf{Other}   \\
		\midrule	\bf{Cnsmr}  & 1      &    		    	 				    \\
					\bf{Manuf}  & 0.76   & 1   			 				   		\\
					\bf{HiTec}  & 0.87   & 0.81      & 1   				    	\\
					\bf{Hlth}   & 0.73   & 0.61      & 0.68   & 1       & 	  	\\
					\bf{Other}  & 0.87   & 0.76      & 0.82   & 0.69    & 1     \\
		\bottomrule[1.5pt]					
	\end{tabular}
	\caption{Summary of estimated model parameters: growth rate $\vgamma$, dividend rate $\vdelta$ and standard deviations $\sqrt{\mSigma_{ii}}$ (left); correlation matrix (right). }
	\label{tab:paramsTable}
\end{table}

\subsection{Simulation}

Next, we use the estimated parameters to simulate industry value paths and compute the strategy's performance in the simulated market model. For the simulation, we consider an investment horizon of $T = 5$ years with daily rebalancing time-steps ($\Delta t = 1/252$; 252 days per year so that the total number of steps is $N = \Delta t \cdot T = 1260$ days) and 1,000 simulations. The asset values are initialized using the market weights at the end of the dataset (market weights on July 2017). Furthermore, we will take the market portfolio to be the performance benchmark and the tracking benchmark, i.e. $\vmu = \vrho = \veta$, as is commonly the case in active portfolio management. Also, we take the absolute penalty matrix to be the covariance matrix, i.e. $\mQ = \mSigma$.

For each simulation, we compute weight vectors for the optimal portfolio at each time step using the estimated model parameters and using various $\vzeta$ values, namely: $\zeta_0 \in \{0.1, 0.5, 5\}$ and $\zeta_1, \zeta_2 \in \{0.001,0.05,0.1,	..., 1\}$. We also compute the GOP and MQP; note that since the model parameters are assumed to be constant, the weight vectors for the GOP and MQP are constant.

\begin{Remark}
The procedure outlined above implicitly assumes that the investor knows the true parameters underlying the data generating process. Clearly, this is not the case in practice and the procedure is more akin to an in-sample form of backtesting. Moreover, in our experiments we find that model misspecification can lead to adverse results as one would expect and it is well-known that parameter estimation, particularly of the asset growth rates, is difficult to perform robustly. However, our point of interest is in studying the features of the optimal portfolio with varying $\vzeta$ values rather than demonstrating its efficacy in actual trading situations. In other words, we wish to separate the parameter estimation problem (which we do not concern ourselves with in this paper) from the stochastic control problem. Additionally, one can heuristically adjust the $\zeta_0$ parameter to reflect the level of ``confidence'' they place in their growth rate estimates.
\end{Remark}

In terms of performance metrics, we will consider the following quantities for each portfolio:
\begin{enumerate}
\item The performance criteria given in \eqref{eqn:perfCrit}, where the running penalties given by the Riemann integrals are approximated using the discretization:
{\small
\begin{align*}
\text{Terminal Reward} ~&=~ \zeta_0 \log\left( \frac{Z_\vpi(T)}{Z_\vmu(T)} \right)\\
\text{Relative Running Penalty} ~&=~ \zeta_1 \Delta t \cdot \sum_{t=1}^N (\vpi(t) - \vmu(t))' ~\mSigma(t)~ (\vpi(t) - \vmu(t))  \\
\text{Absolute Running Penalty} ~&=~ \zeta_2 \Delta t \cdot \sum_{t=1}^N \vpi(t)' ~\mQ(t)~ \vpi(t)
\end{align*}
}
\item The (average) absolute return and active return of each portfolio, defined as the time-series sample mean of portfolio returns and excess returns over the market (for a single path), respectively:
{\small
\begin{align*}
\mbox{Absolute Return} ~&=~ \frac{1}{\Delta t} \frac{1}{N} \sum_{t=1}^N \vpi(t)' ~\boldsymbol{R}(t,t+1) \\
\mbox{Active Return} ~&=~ \frac{1}{\Delta t} \frac{1}{N} \sum_{t=1}^N (\vpi(t) - \vmu(t))' ~\boldsymbol{R}(t,t+1)
\end{align*}}
where $\boldsymbol{R}(t,t+1)$ is the vector of asset returns between $t$ and $t+1$. Note that the factor $\frac{1}{\Delta t}$ is an annualization factor.
\item The absolute and active risk of each strategy, defined as the time-series sample standard deviation of portfolio returns and excess returns over the market, respectively:
{\small
\begin{align*}
\mbox{Absolute Risk} ~&=~ \sqrt {\frac{1}{\Delta t} \frac{1}{N} \sum_{t=1}^N \left( \vpi(t)' ~\boldsymbol{R}(t,t+1) - \mbox{Absolute Return}\right)^2} \\
\mbox{Active Risk} ~&=~ \sqrt {\frac{1}{\Delta t} \frac{1}{N} \sum_{t=1}^N \left( (\vpi(t) - \vmu(t))' ~\boldsymbol{R}(t,t+1) - \mbox{Active Return}\right)^2}
\end{align*} }
\item The Sharpe ratio and information ratio defined as the ratio of absolute return to absolute risk and active return to active risk, respectively:
{\small \begin{align*}
\mbox{Sharpe Ratio} ~&=~ \frac{\mbox{Absolute Return}}{\mbox{Absolute Risk}} \\
\mbox{Information Ratio} ~&=~ \frac{\mbox{Active Return}}{\mbox{Active Risk}}
\end{align*} }
\end{enumerate}

\subsection{Results}

Figure \ref{fig:returnAndRiskSurfaces} shows the average absolute and active risk and return values across all simulations for the optimal strategy with different $\vzeta$ values. We find that as $\zeta_1$ increases, the performance metrics converge to the metrics of the market portfolio; as $\zeta_2$ increases, they converge to the metrics of the MQP (recall that we set $\mQ = \mSigma$). The rate of change is higher for lower values of $\zeta_0$; the optimal strategy is less sensitive to changes in $\zeta_1$ and $\zeta_2$ for high values of $\zeta_0$, in which case it is more closely tied to the GOP.

The convergence features discussed above can be seem more apparently in the scatterplots shown in Figure \ref{fig:riskReturnScatter}, which show the absolute and relative risk-return characteristics of each of the portfolios we analyze. It is apparent that the GOP outperforms all other portfolios in terms of expected return, however, this comes at a high level of absolute risk. On the other hand, we find that the optimal strategies with $\zeta_0 = 0.1$ (shown in red) have a higher return per unit of risk. Additionally, the optimal portfolios dominate the MDP and markt portfolio and trace out an ``efficient frontier'' tying the GOP and MQP. The alignment of the optimal portfolios in the second plot in Figure \ref{fig:riskReturnScatter} is a manifestation of the observation that the information ratio is insensitive to $\zeta_1$. Also, the dots converge to the origin, since the optimal solutions converge to the market portfolio as $\zeta_1$ tends to infinity.

\begin{figure}[h!]
\centering
	\includegraphics[width=0.48\textwidth]{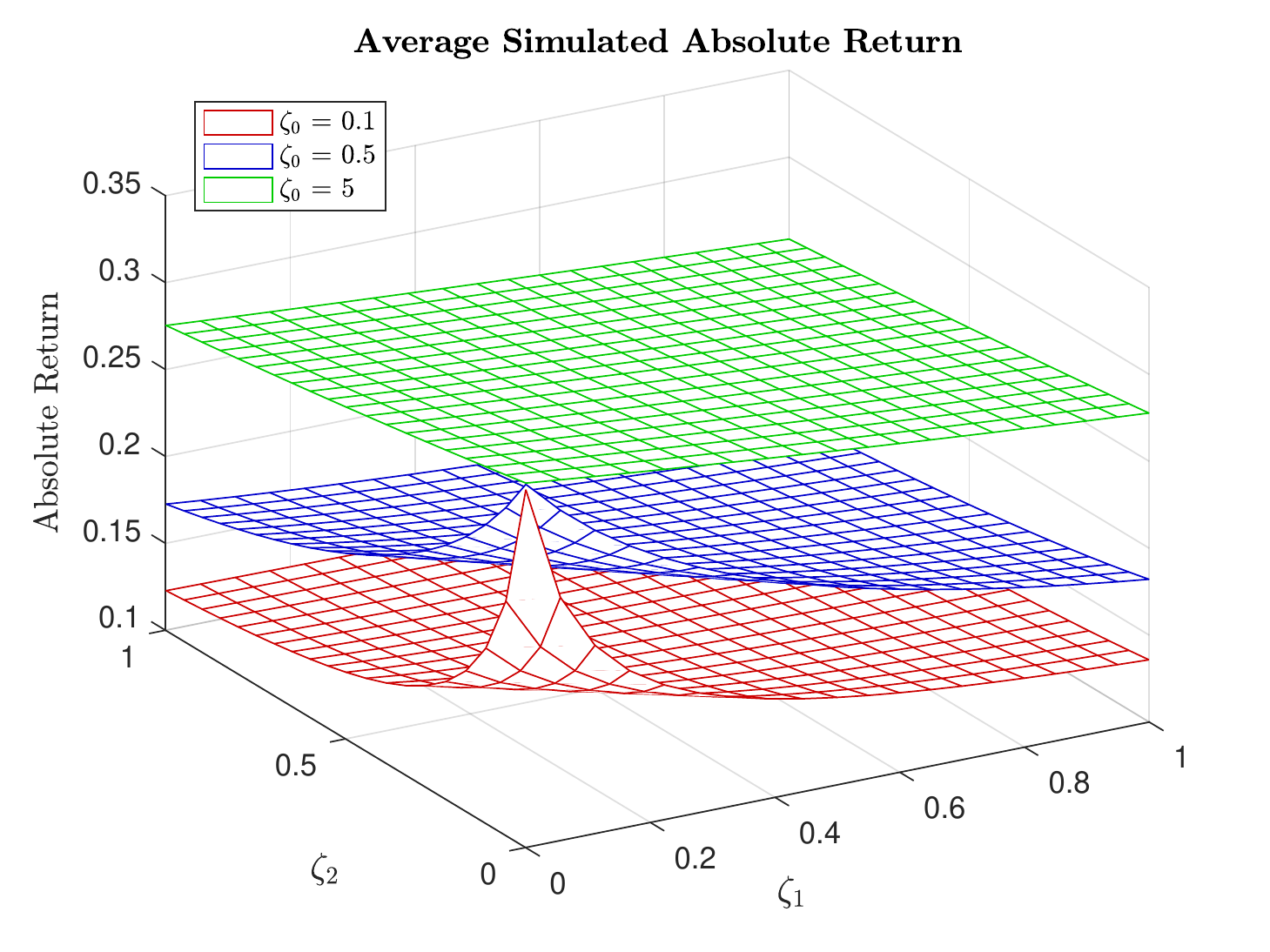}
	\includegraphics[width=0.48\textwidth]{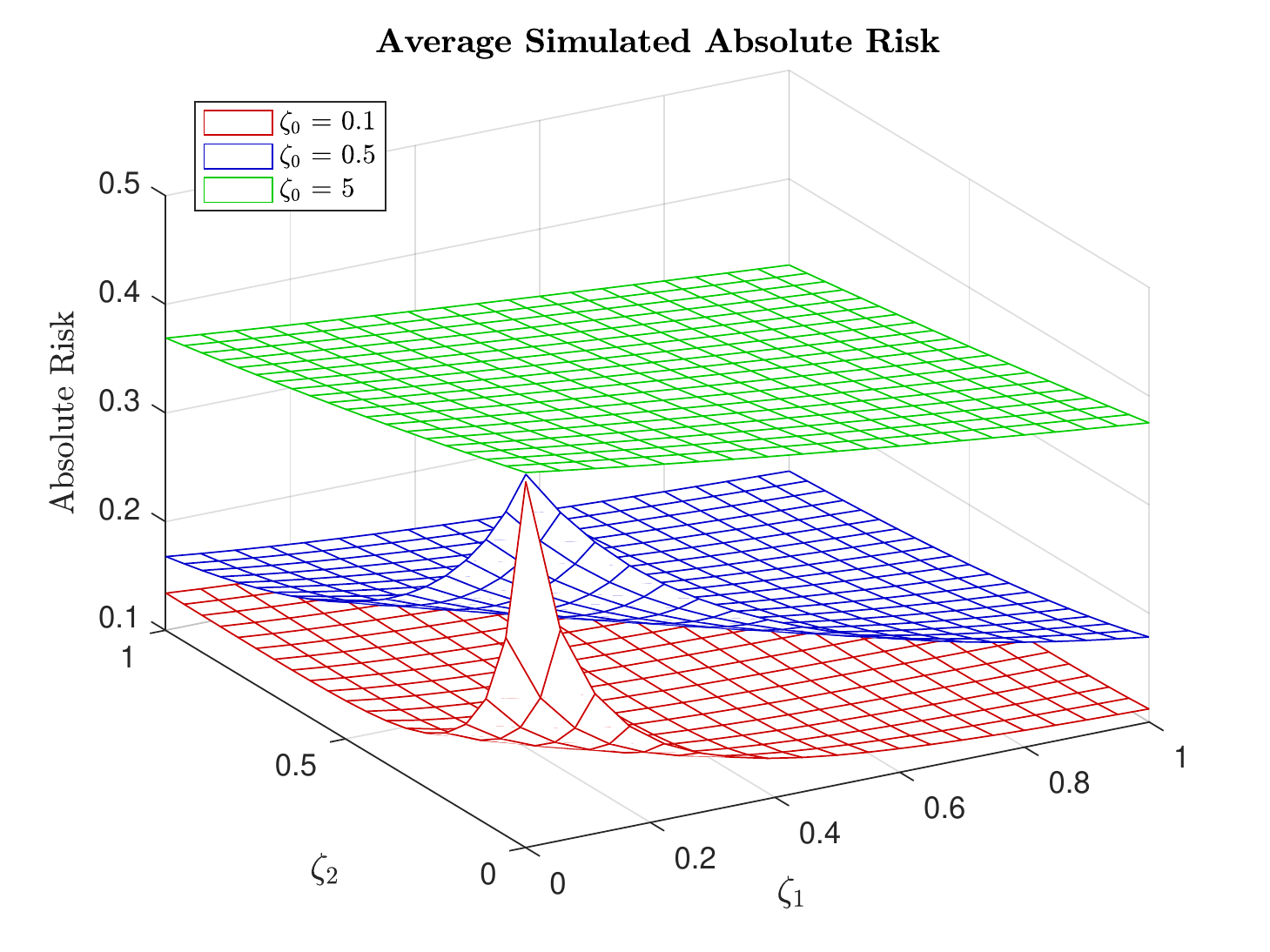}
	\includegraphics[width=0.48\textwidth]{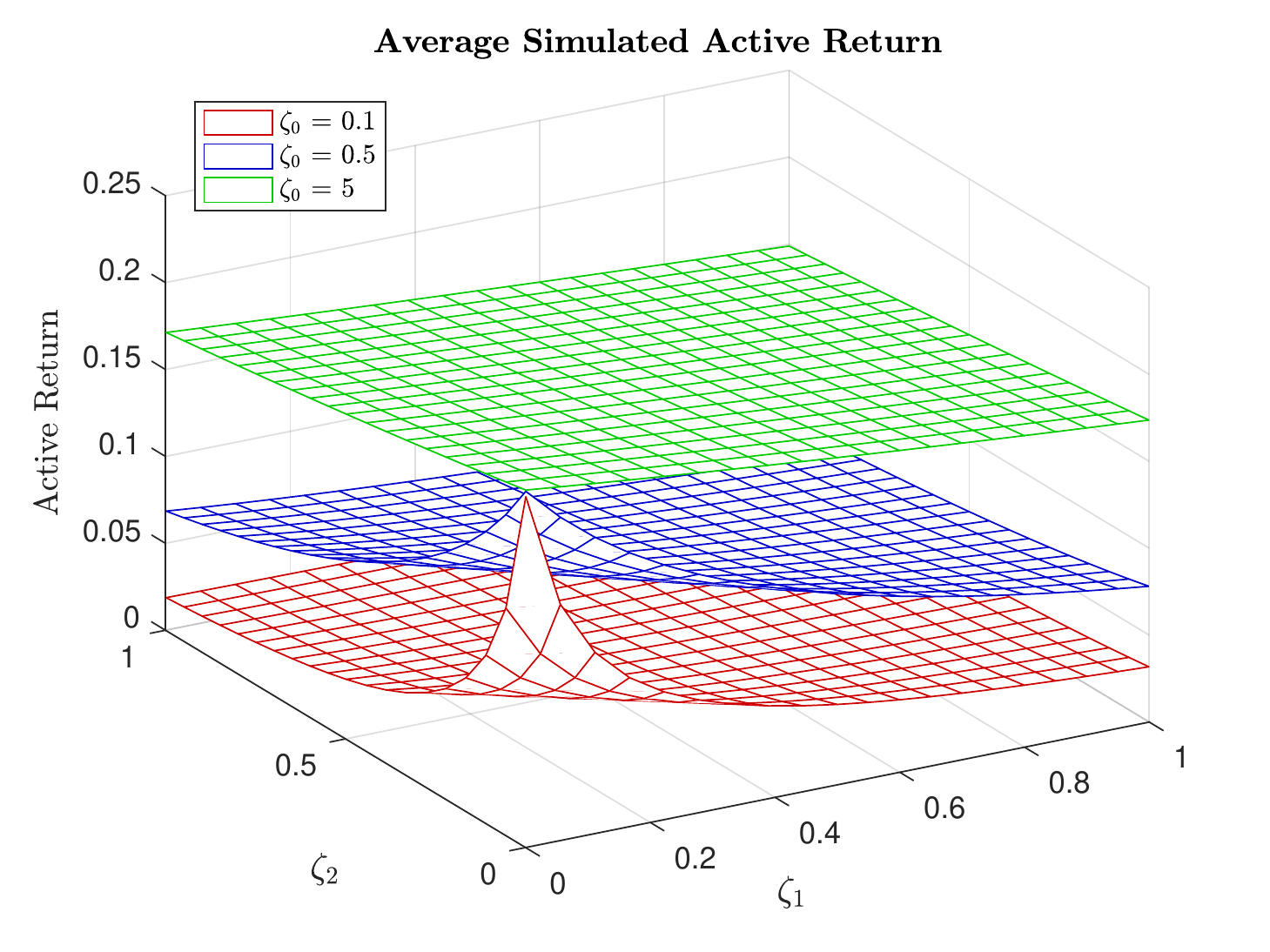}
	\includegraphics[width=0.48\textwidth]{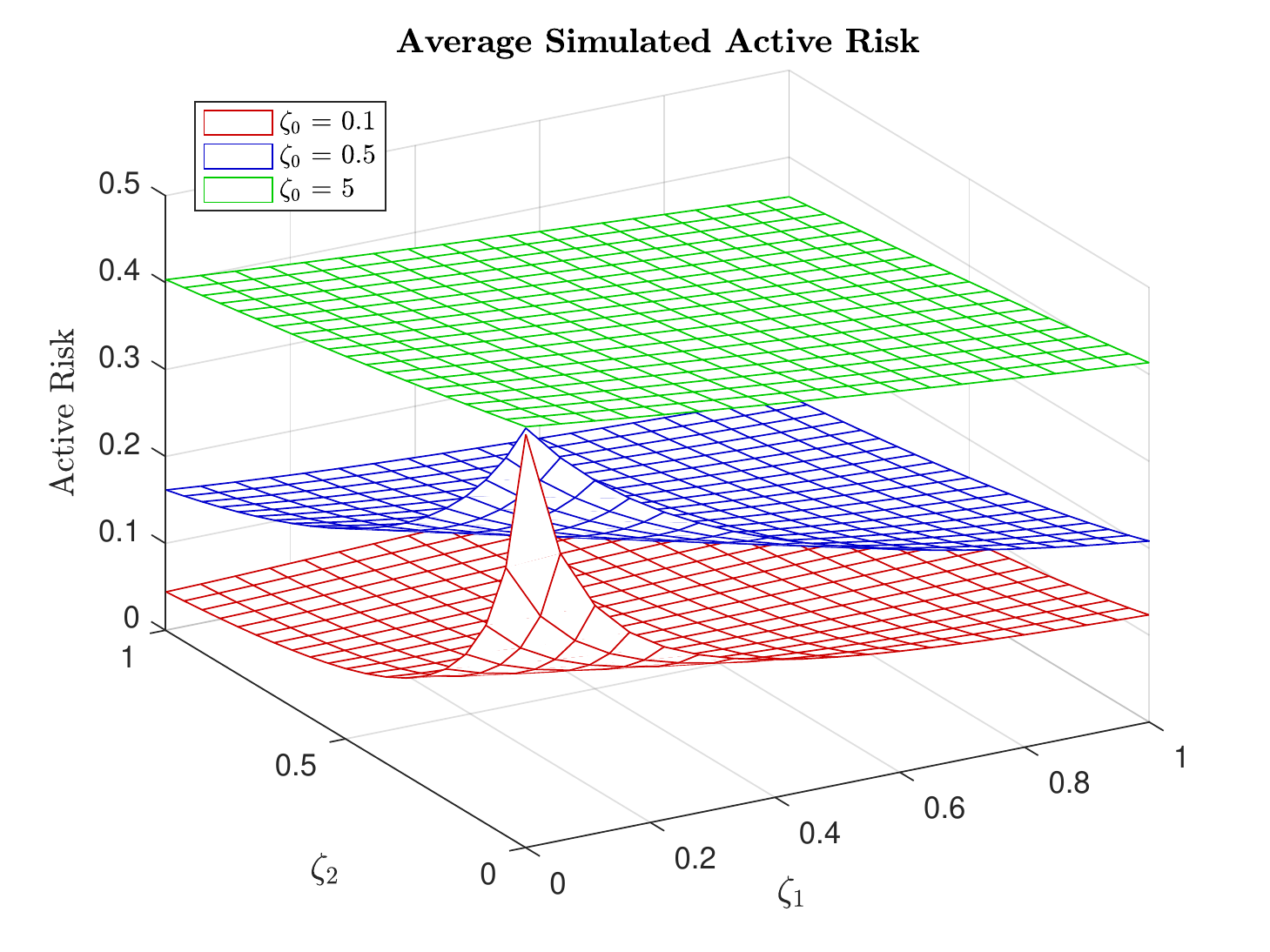}
	\caption{Surfaces of average simulated absolute return and risk (top panels) and active return and risk (bottom panels) for all $\vzeta$ values; each point is the average metric value across all simulations for the optimal portfolio with the corresponding $\vzeta$.}
	\label{fig:returnAndRiskSurfaces}
\end{figure}

\begin{figure}[h!]
\centering
	\includegraphics[width=0.48\textwidth]{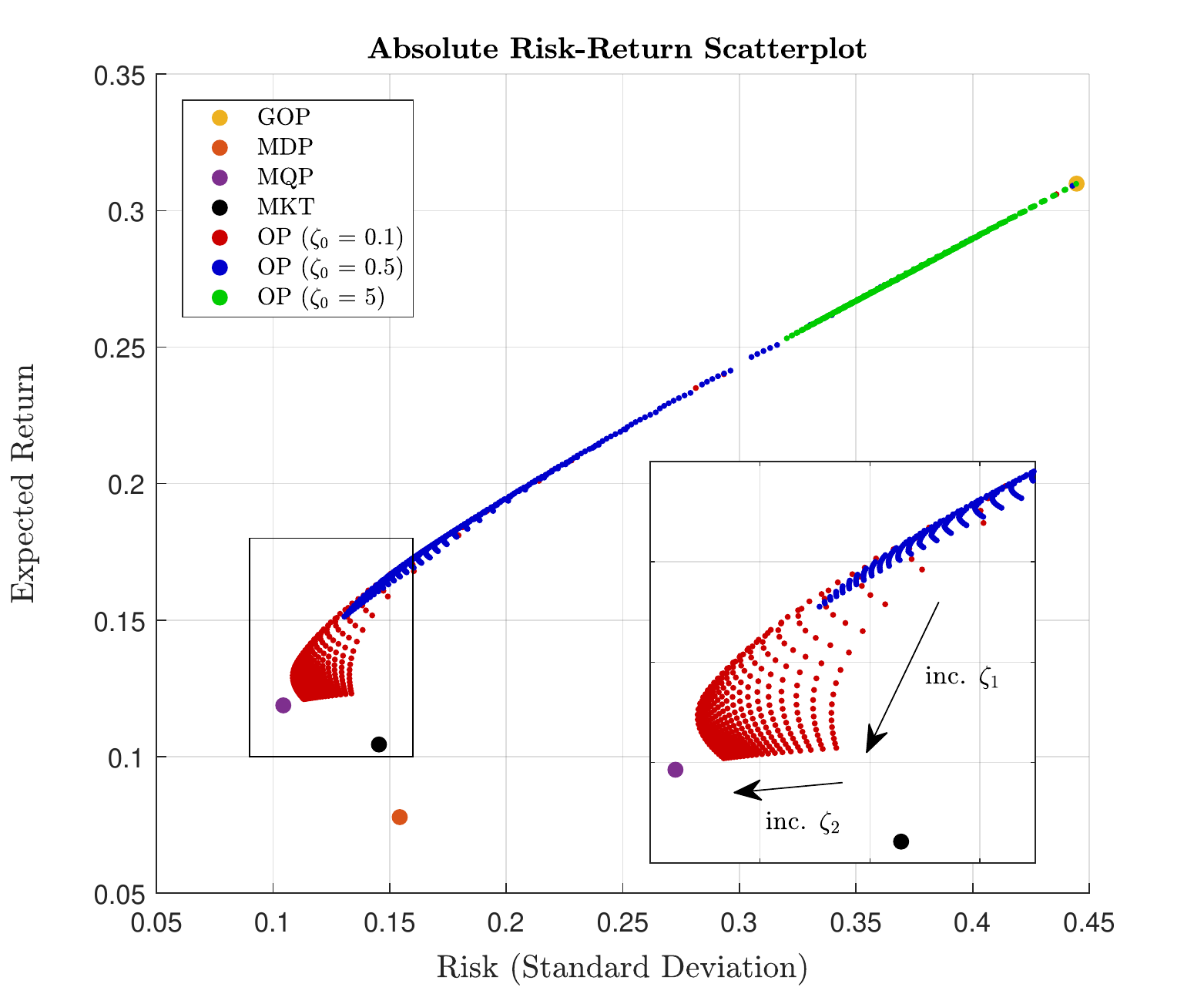}
	\includegraphics[width=0.48\textwidth]{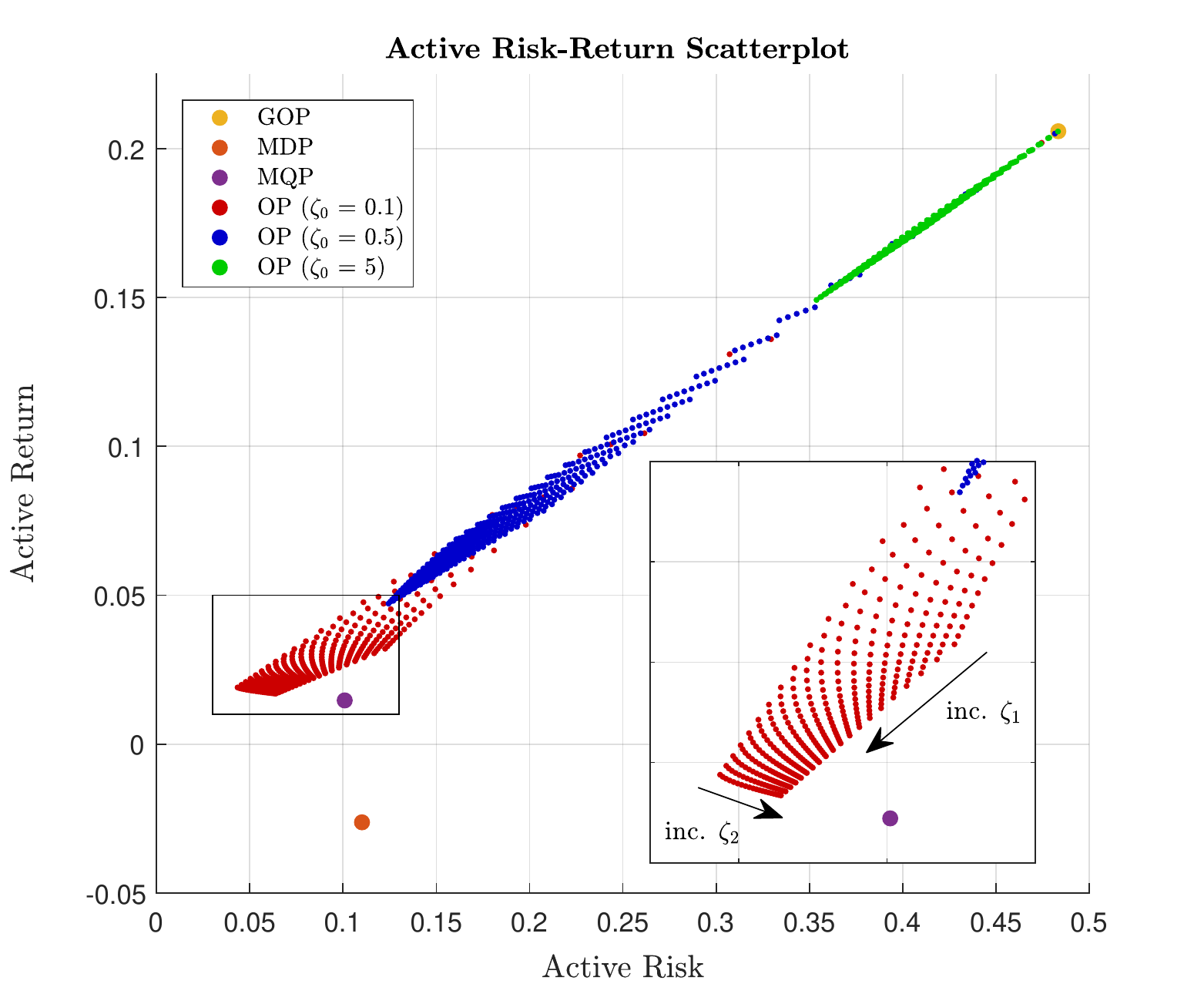}
	\caption{Scatterplots of absolute risk vs. absolute return (left panel) and active risk vs. active return (right panel) for the optimal portfolios with $\zeta_0 = 0.1$ (red),  $\zeta_0 = 0.5$ (blue),  $\zeta_0 = 5$ (green) and varying $\zeta_1, \zeta_2$ and GOP, MDP, MQP and market portfolio. Each dot is the average active risk/return for the corresponding portfolio across all simulations.}
	\label{fig:riskReturnScatter}
\end{figure}

The right panel in Figure \ref{fig:sharpeAndInfoRatioByZeta} shows the effect of varying $\zeta_2$ on the information ratio of the optimal strategy (note that the information ratio of the optimal portfolio is not sensitive to $\zeta_1$). As $\zeta_2 \rightarrow 0$, the information ratio of the optimal strategy converges to that of the GOP; as $\zeta_2 \rightarrow \infty$, the information ratio of the optimal strategy decreases to that of the MQP. Note that if the information ratio of the GOP was less than that of the MQP, the information ratio of the optimal strategy would \textit{increase} up to the MQP metric as $\zeta_2$ increases. The left panel in Figure \ref{fig:sharpeAndInfoRatioByZeta} shows the Sharpe ratio of the optimal portfolio for different $\vzeta$ values. The effect of varying these parameters on Sharpe ratio is more complex.

Finally, Figure \ref{fig:performanceDistribution} shows historgrams of the Sharpe ratio and information ratio of the optimal portfolio for $\vzeta = (0.5,0.5,0.5)$ compared to the MDP, GOP, MQP and the market portfolio. This gives a sense of the level of dispersion of these metrics for all portfolios.

\begin{figure}[h!]
\centering
	\includegraphics[width=0.48\textwidth]{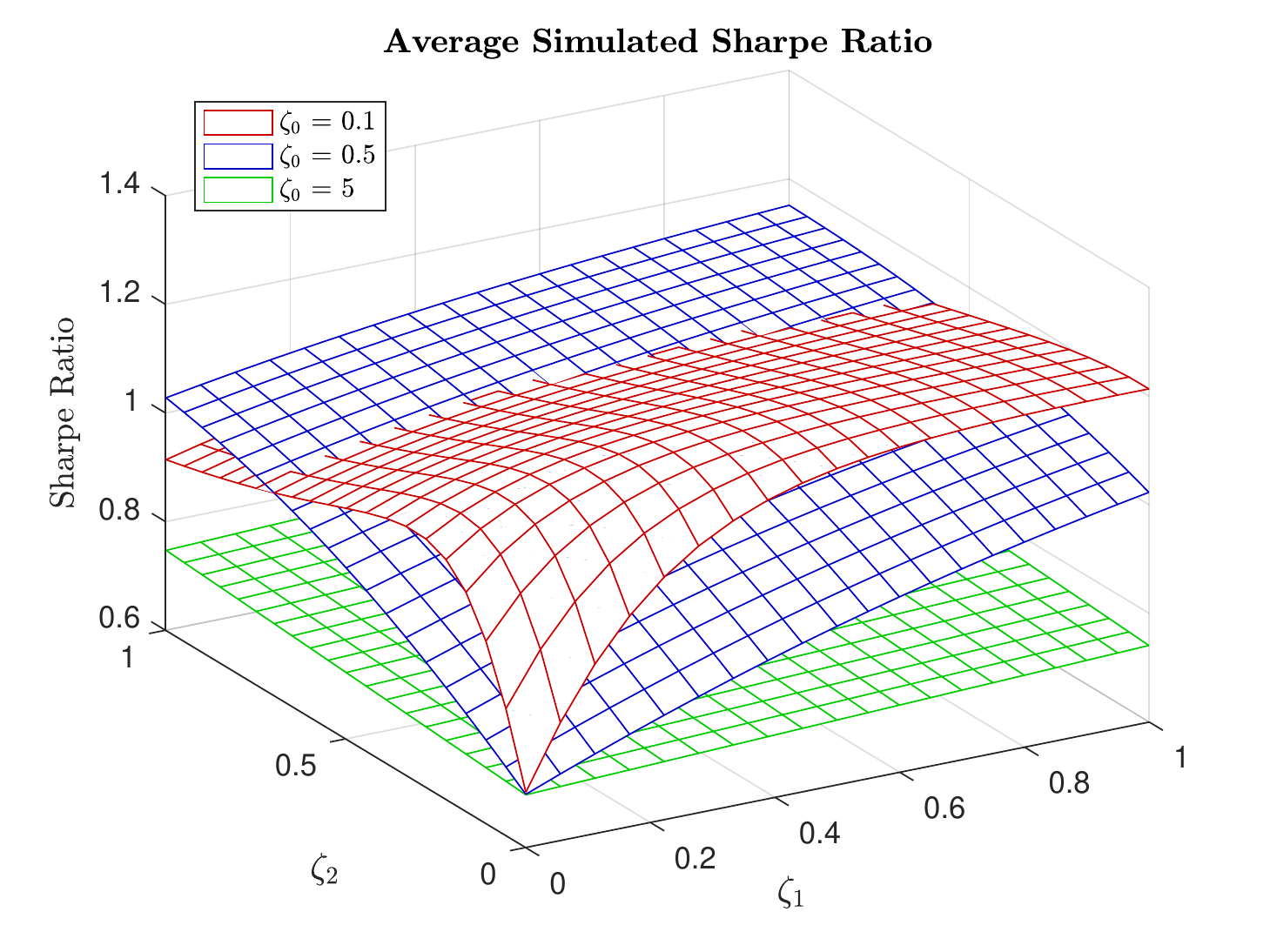}
	\includegraphics[width=0.48\textwidth]{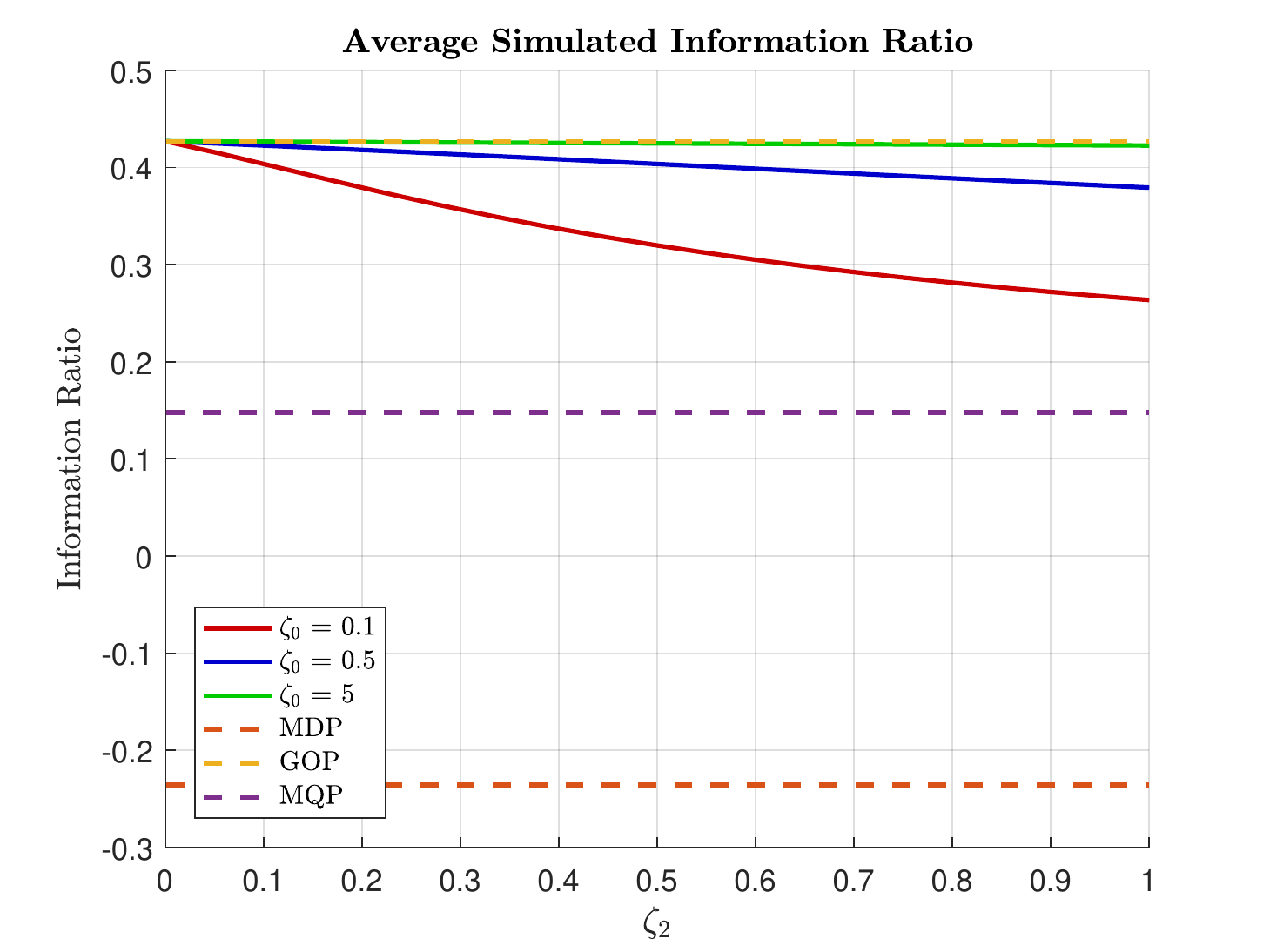}
	\caption{Left panel: Surfaces of average simulated Sharpe ratios for all $\vzeta$ values; each point on the surface is the average Sharpe ratio across all simulations for the optimal portfolio with the corresponding $\vzeta$. Right panel: average information ratio across simulations as a function of $\zeta_2$ compared to MDP, GOP, MQP.}
	\label{fig:sharpeAndInfoRatioByZeta}
\end{figure}

\clearpage

\begin{figure}[h!]
\centering
	\includegraphics[width=0.48\textwidth]{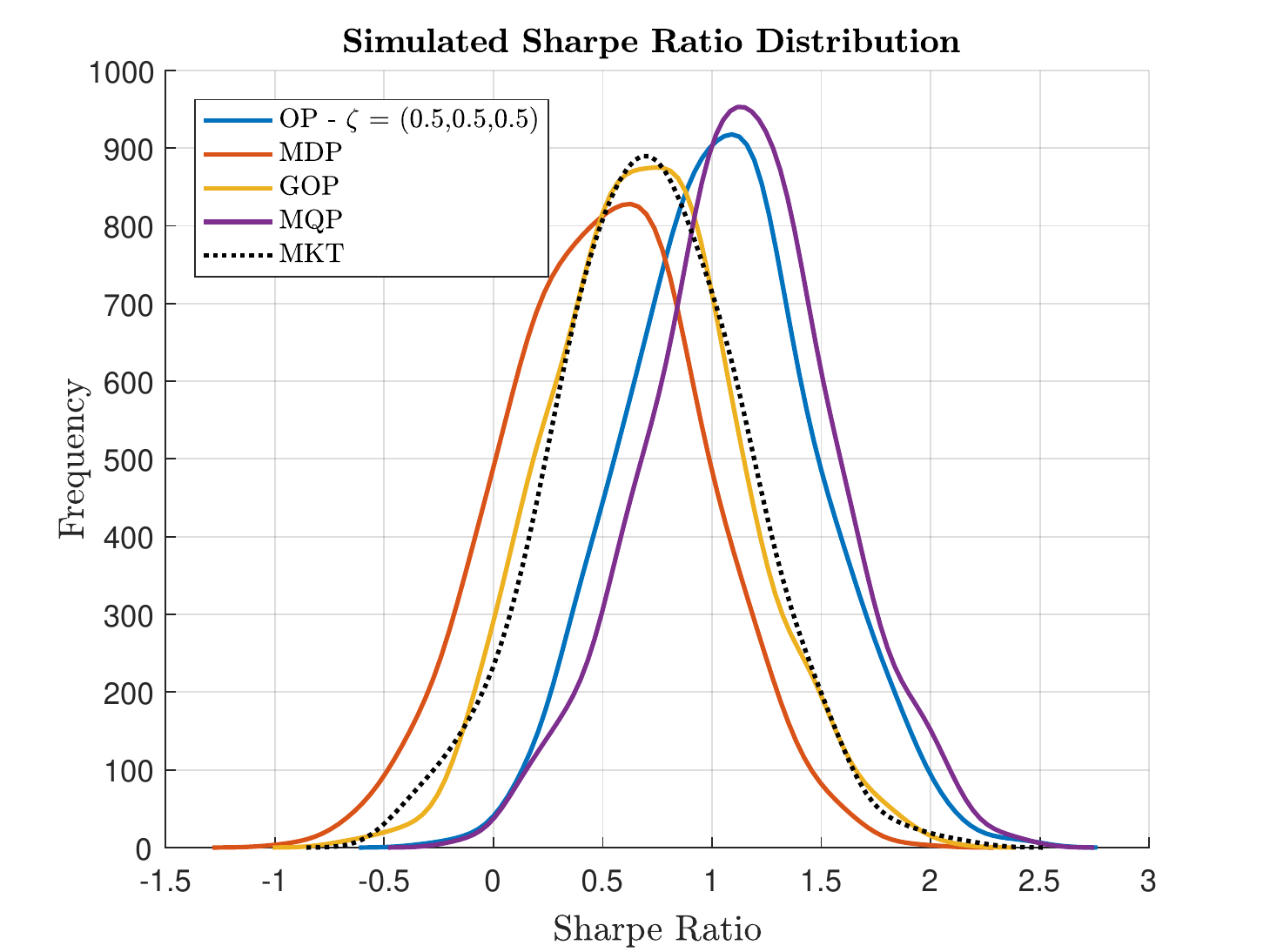}
		\includegraphics[width=0.48\textwidth]{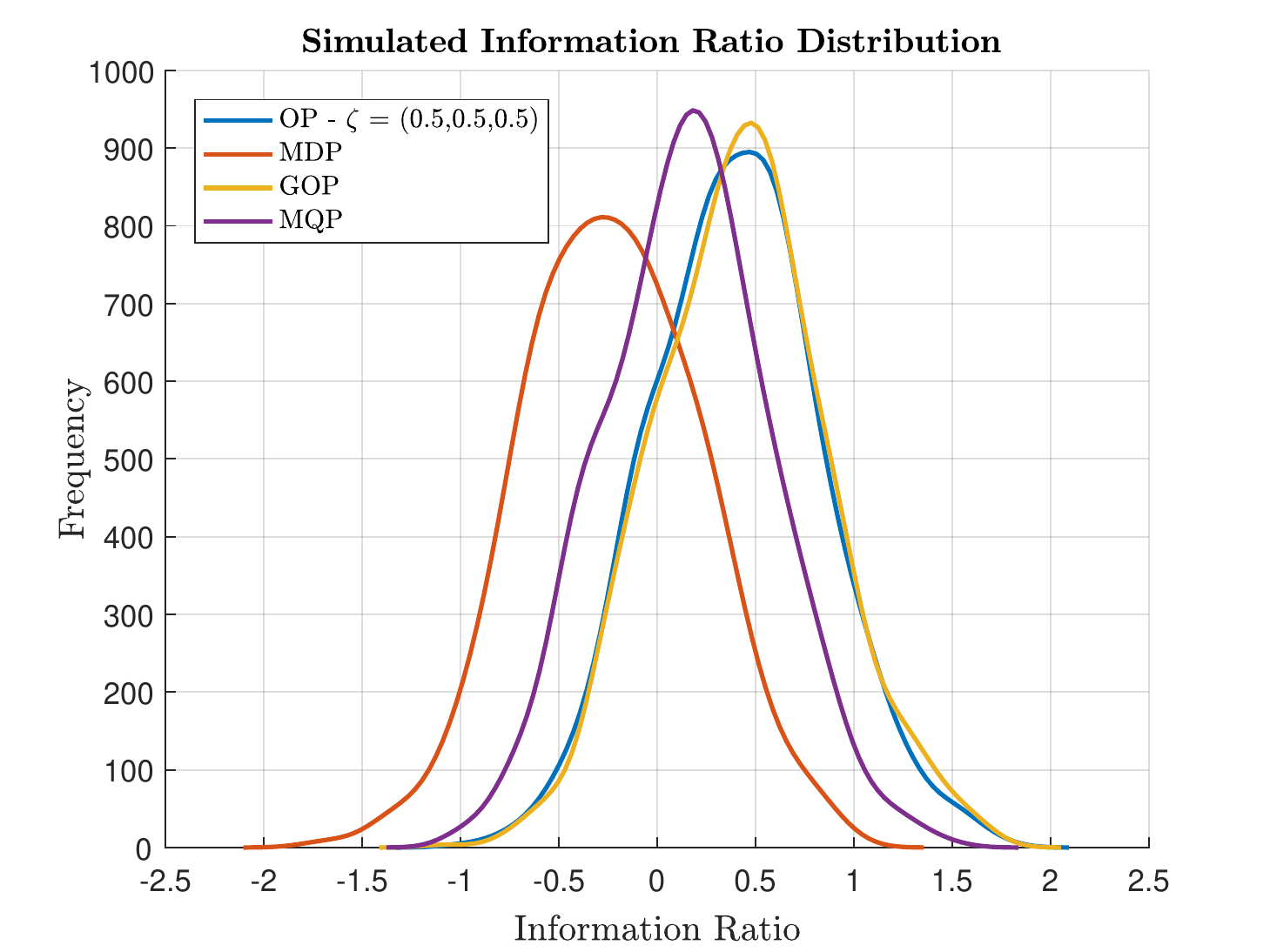}
	\caption{Distribution of simulated Sharpe ratios (left panel) and information ratios (right panel) for the optimal portfolio with $\vzeta = (0.5,0.5,0.5)$, MDP, GOP, MQP and the market portfolio.}
	\label{fig:performanceDistribution}
\end{figure}

\section{Conclusions}

In this article, we addressed the problem of how to dynamically allocate funds to beat a performance benchmark while remaining tethered to a tracking benchmark. An auxiliary penalty term is included to achieve additional goals, including controlling the quadratic variation of the wealth path and fine-tuning the allocation to individual securities. We related the solution to the growth optimal portfolio and minimum quadratic variance portfolio and discussed some of its limiting properties with respect to these portfolios. We furthermore provide a verification theorem demonstrating that our solution solves the optimal investment problem. Based on simulations of the market model, we show that the investor is able to control their risk-reward profile by tuning various tolerance parameters. Moreover, the optimal portfolios we find appear to outperform the static optimization of \cite{oderda2015} for all values of the penalty parameters in our simulation setting.

There are several future directions left open for research. One of these directions for which we already have preliminary results is related to incorporating latent information in the return and dividend process. Returns are notoriously difficult to estimate, and having a model which allows them to be stochastic, but also driven by latent factors, is essential to making the strategy robust to differing market environments. Extending the results to rank-based market models (e.g. Atlas models) that are prevalent in SPT literature is another interesting direction.

\begin{appendices}

\small

\renewcommand{\baselinestretch}{1}

\section{Proofs}
\subsection{Proof of Proposition \ref{prop:valueFn}} \label{proof:valueFn}
The PDE in \eqref{eqn:HJB} can be written as:
\begin{equation}
\begin{cases}
\left(\partial_t + \LL^\vX \right) h + \underset{\vpi \in \As}{\sup} \left\{  F(t,\vx,\vpi) \right\} = 0\;,\\
\hspace{4cm} h(T,\vx) = 0\;.
\end{cases}
\end{equation}
where, using the definition of $a(t,\vrho,\vpi)$ given in Proposition \ref{prop:Y_SDE} along with Assumption \ref{asmp:markov}, we have:
\begin{align*}
F(t,\vx,\vpi) &= \zeta_0 \cdot a(t,\rho(t,\vx),\vpi) - \tfrac{\zeta_1}{2} \cdot (\vpi - \eta(t,\vx))' \mSigma(t) (\vpi - \eta(t,\vx)) - \tfrac{\zeta_2}{2} \cdot \vpi' \mQ(t) \vpi \\
&= \zeta_0 \cdot \left[  \vpi' \valpha(t) - \tfrac{1}{2} \vpi' \mSigma(t) \vpi - \left(\vgamma_{\rho(t,\vx)}(t) + \vdelta_{\rho(t,\vx)}(t) \right) \right] \\
& \hspace{1cm} - \tfrac{\zeta_1}{2} \cdot \left[ \vpi' \mSigma(t) \vpi - 2 \vpi' \mSigma(t) \eta(t,\vx) + \eta(t,\vx)' \mSigma(t) \eta(t,\vx)  \right] - \tfrac{\zeta_2}{2} \cdot \vpi' \mQ(t) \vpi \\
& = - \tfrac{1}{2} \vpi' \left[ (\zeta_0 + \zeta_1) \mSigma(t) + \zeta_2 \mQ(t) \right] \vpi \\
& \hspace{1cm} + \vpi' \left[ \zeta_0 \valpha(t) + \zeta_1 \mSigma(t) \eta(t,\vx) \right] - \zeta_0 \cdot \left(\vgamma_{\rho(t,\vx)}(t) + \vdelta_{\rho(t,\vx)}(t) \right) - \tfrac{\zeta_1}{2} \cdot \eta(t,\vx)' \mSigma(t) \eta(t,\vx)  \\
& = -\tfrac{1}{2} \vpi' \mA(t) \vpi + \vpi' \vB(t,\vx) - C(t,\vx)
\end{align*}
which allows us to then rewrite the PDE above in the following concise form:
\begin{equation}
\begin{cases}
\left(\partial_t + \LL^\vX \right) h - C(t,\vx) + \underset{\vpi \in \As}{\sup} \left\{  -\tfrac{1}{2} \vpi' \mA(t) \vpi + \vpi' \vB(t,\vx) \right\} = 0\;,\\
\hspace{8.1cm} h(T,\vx) = 0\;.
\end{cases}
\end{equation}
where
\begin{align*}
\underset{\color{red} (n \times n)}{\mA(t)} &=  (\zeta_0 + \zeta_1) \mSigma(t) + \zeta_2 \mQ(t) \;,
 \\
\underset{\color{red} (n \times 1)}{\vB(t,\vx)} &= \zeta_0 \valpha(t) + \zeta_1 \mSigma(t) \eta(t,\vx) \;,
\quad \text{and}
\\
\underset{\color{red} (1 \times 1)}{C(t,\vx)} &= \zeta_0 \cdot \left(\vgamma_{\rho(t,\vx)}(t) + \vdelta_{\rho(t,\vx)}(t) \right) + \tfrac{\zeta_1}{2} \cdot \eta(t,\vx)' \mSigma(t) \eta(t,\vx) \;.
\end{align*}

The next step is to solve for the optimal control by determining the first order conditions. The optimization problem of interest in this case is:
\begin{equation*}
\underset{\vpi}{\max} ~  \left\{-\tfrac{1}{2} \vpi' \mA(t) \vpi + \vpi' \vB(t,\vx)  \right\} \quad \text{subject to } \quad \ones' \vpi = 1\;.
\end{equation*}
This can be solved using a Lagrange multiplier to find that the optimal control is
\begin{equation} \label{FOC}
\vpi_\vzeta^*(t) = \mA^{-1}(t) \cdot \left[ \frac{1 - \ones' \mA^{-1}(t) \vB(t,\vx)}{\ones' \mA^{-1}(t) \ones} \cdot \ones + \vB(t,\vx) \right] = \mA^{-1}(t) \cdot \vd(t,\vx)\;,
\end{equation}
where $\underset{\color{red} (n \times 1)}{\vd(t,\vx) }= \frac{1 - \ones' \mA^{-1}(t) \vB(t,\vx)}{\ones' \mA^{-1}(t) \ones} \cdot \ones + \vB(t,\vx) $. Note that $\mA$ is positive definite, so the optimal control is indeed a maximizer. Next, notice that we can write the quadratic term in the PDE above (suppressing the dependence on $t$ and $\vx$ for notational simplicity) as follows:
\begin{align}
-\tfrac{1}{2} \left( \vpi_\vzeta^* \right)' \mA \vpi_\vzeta^* + \left( \vpi_\vzeta^* \right)' \vB &= -\tfrac{1}{2} \vd' \mA^{-1} \cdot \mA \cdot \mA^{-1} \cdot \vd + \vd' \mA^{-1} \vB \quad \quad \quad \quad \mbox{\footnotesize (note that $\mA$ and $\mA^{-1}$ are symmetric)} \nonumber \\
&= \vd' \mA^{-1} \left[ -\tfrac{1}{2} \vd + \vB \right]  \nonumber \\
&= -\tfrac{1}{2} \left[ \frac{1 - \ones' \mA^{-1} \vB}{\ones' \mA^{-1} \ones} \cdot \ones + \vB  \right]' \mA^{-1} \left[  \frac{1 - \ones' \mA^{-1} \vB}{\ones' \mA^{-1} \ones} \cdot \ones - \vB \right]
\end{align}
Therefore, the PDE in \eqref{eqn:HJB} becomes:
\begin{equation} \label{eqn:HJB2}
\begin{cases}
\left( \partial_t + \LL^\vX \right) h - G(t,\vx) = 0\;,\\
\hspace{2.45cm} h(T,\vx) = 0\;,
\end{cases}
\end{equation}
where
\begin{align*}
G(t,\vx) =  C(t,\vx) + \frac{1}{2} \left[ \frac{1 - \ones' \mA^{-1}(t) \vB(t,\vx)}{\ones' \mA^{-1}(t) \ones} \cdot \ones + \vB(t,\vx)  \right]' \mA^{-1}(t) \left[  \frac{1 - \ones' \mA^{-1}(t) \vB(t,\vx)}{\ones' \mA^{-1}(t) \ones} \cdot \ones - \vB(t,\vx) \right] \;.
\end{align*}

The remainder of the proof relies on finding a Feynman-Kac representation for the solution to the PDE above. It is slightly more convenient at this point to perform a change of variables and use $\log \vX$ as the state variables. Note that this mainly affects the infinitesimal generator in the PDE above. Namely, we have:
\[ \LL^\vX = \tfrac{1}{2} \sum_{i,j=1}^{n} { \color{red} \underbrace{\color{black} \mSigma_{ij}(t)}_{= ~a_{ij}} } \cdot \partial_{ij} + \sum_{i=1}^{n} {\color{red} \underbrace{ \color{black} \left(\gamma_i(t) + \delta_i(t)\right)}_{= ~b_i}} \cdot \partial_i  \]
First, we establish the existence of a sufficiently smooth solution to the PDE \eqref{eqn:HJB2}. For this, we rely on the sufficient conditions given in Remark 7.8 in Chapter 5 of \cite{karatzas2012}, namely: (i) uniform ellipticity of the PDE on $[0,\infty) \times \RR^n$; (ii) boundedness of $a_{ij}, b_i$ in $[0,T] \times \RR^n$; (iii) uniform H\"{o}lder continuity of $a_{ij}, b_i$ and $G$ in $[0,T] \times \RR^n$; (iv) $G$ satisfies $|G(t,\vx)| \leq K(1+\|\vx\|^{2\lambda})$ for some $K > 0, \lambda \geq 1$.

The first condition holds by the nondegeneracy assumption we made on the covariance matrix $\mSigma$. For the second condition, we have assumed that $\vgamma$ and $\vdelta$ are bounded above and below for all $t$ and we can use the nondegeneracy and bounded variance assumptions made in Section \ref{sec:modelSetup} to obtain upper and lower bounds on the elements of $\mSigma(t)$. In particular, since we have that for any $\mathbf{x},\mathbf{y} \in \RR^n$ and any $t \geq 0$:
\begin{align*}
& \varepsilon_1 \| \mathbf{x} + \mathbf{y} \|^2 ~\leq~ (\mathbf{x} + \mathbf{y})' \mSigma(t) (\mathbf{x} + \mathbf{y}) ~\leq~ M_1 \| \mathbf{x} + \mathbf{y} \|^2 \\
\implies \quad & \varepsilon_1 \| \mathbf{x} + \mathbf{y} \|^2 ~\leq~ \mathbf{x}' \mSigma(t) \mathbf{x} + \mathbf{y}' \mSigma(t) \mathbf{y} + 2 \mathbf{x}' \mSigma(t) \mathbf{y}  ~\leq~ M_1 \| \mathbf{x} + \mathbf{y} \|^2 \\
\implies \quad & \tfrac{1}{2} \left( \varepsilon_1 \| \mathbf{x} + \mathbf{y} \|^2 - M_1 \left( \| \mathbf{x} \|^2 + \| \mathbf{y} \|^2 \right) \right) ~\leq~  \mathbf{x}' \mSigma(t) \mathbf{y}  ~\leq~ M_1 \| \mathbf{x} + \mathbf{y} \|^2
\end{align*}
Taking $\mathbf{x} = \mathbf{e}_i$ and $\mathbf{y} = \mathbf{e}_j$, where $\mathbf{e}_i$ is vector of zeros with a 1 in the $i^{\scriptsize \mbox{th}}$ position, we obtain the following bound on the covariance elements:
\[ \varepsilon_1 - M_1 ~\leq~  \mSigma_{ij}(t) ~\leq~  4M_1 \;. \]
This shows that condition (ii) is satisfied. Next, we show that the function $G$ is bounded and hence satisfies the polynomial growth condition (iv), which will also be needed to prove that condition (iii) holds. For this, we note that since the elements of $\mSigma$ along with $\rho(t,\vx)$ and $\eta(t,\vx)$ are bounded, we conclude that the functions $\vB(t,\vx)$ and $C(t,\vx)$ are bounded. Now, it suffices to show that quadratic forms $\mathbf{x}' \mA^{-1}(t) \mathbf{y}$ for $\mathbf{x}, \mathbf{y}$ in some bounded subset of $\RR^n$ are also bounded. To this end, recall that $\mA(t) =  (\zeta_0 + \zeta_1) \mSigma(t) + \zeta_2 \mQ(t)$ and that for all $\mathbf{x} \in \RR^n$:
\begin{align*}
\varepsilon_1 \| \mathbf{x} \|^2 ~\leq~ \mathbf{x}' \mSigma(t) \mathbf{x} ~\leq~ M_1 \| \mathbf{x} \|^2  \\
\varepsilon_2 \| \mathbf{x} \|^2 ~\leq~ \mathbf{x}' \mQ(t) \mathbf{x} ~\leq~ M_2 \| \mathbf{x} \|^2
\end{align*}
which implies that there exists $\epsilon, M > 0$ such that for all $\mathbf{x} \in \RR^n$:
\[ \qquad \epsilon \| \mathbf{x} \|^2 ~\leq~ \mathbf{x}' \mA(t) \mathbf{x} ~\leq~ M \| \mathbf{x} \|^2 \]
Denoting the eignevalues of $\mA(t)$ by $\lambda_1(t) \geq ... \geq \lambda_n(t)$, we note that the bounds above are tight when we replace $M$ and $\varepsilon$ with $\lambda_1$ and $\lambda_n$, respectively. Thus, the eigenvalues satisfy:
\[ \varepsilon \leq \lambda_n(t) \leq \lambda_1(t) \leq M \qquad \implies \qquad \frac{1}{M} \leq \frac{1}{\lambda_1(t)} \leq \frac{1}{\lambda_n(t)} \leq \frac{1}{\varepsilon} \]
from which we can conclude there exist positive constants $\tilde{\varepsilon}, \tilde{M}$ such that for any $\mathbf{x} \in \RR^n$:
\[ \tilde{\varepsilon} \| \mathbf{x} \|^2 ~\leq~ \mathbf{x}' \mA^{-1}(t) \mathbf{x} ~\leq~  \tilde{M} \| \mathbf{x} \|^2 \]
and by similar reasoning as above we can obtain:
\[ \tfrac{1}{2} \left( \tilde{\varepsilon} \| \mathbf{x} + \mathbf{y} \|^2 - \tilde{M} \left( \| \mathbf{x} \|^2 + \| \mathbf{y} \|^2 \right) \right) ~\leq~  \mathbf{x}' \mA^{-1}(t) \mathbf{y}  ~\leq~ \tilde{M} \| \mathbf{x} + \mathbf{y} \|^2 \]
for all $\mathbf{x}, \mathbf{y} \in \RR^n$ . When $\mathbf{x}, \mathbf{y}$ are in a bounded subset of $\RR^n$, the norms on either side of the inequality above are also bounded. Combining this with the fact that $\vB$ and $C$ are bounded we can conclude that $G$ is in fact bounded (since we are concerned with quadratic forms involving $\ones$ and $\vB$), and hence satisfies the polynomial growth condition (iv).

To prove condition (iii) we note that the assumption of differentiability on $\gamma_i$, $\delta_i$ and $\xi_{i \nu}$ implies that $a_{ij} = \mSigma_{ij} = \sum_{\nu = 1}^{k} \xi_{i \nu} \xi_{j \nu}$ and $b_i = \gamma_i + \delta_i$ are differentiable and hence uniformly H\"{o}lder continuous. Similarly, $G$ can be shown to be differentiable and hence uniformly H\"{o}lder continuous on $[0,T] \times \RR^n$ by noting that $\mA$, $\vB$ and C are all bounded and differentiable which implies the same for $G$. Since conditions (i)-(iv) are satisfied, a solution exists to the PDE \eqref{eqn:HJB2} and, by Theorem 7.6 in Chapter 5 of \cite{karatzas2012}, the unique solution is given by the following Feynman-Kac representation:
\begin{equation}
h(t,\vx)= \EE_{t,\vx} \left[ -\int_t^T G(s,\vX(s)) ~ ds \right],
\end{equation}
where the expectation is taken under the physical measure $\PP$.
\hfill
$\blacksquare$

\subsection{Proof of Theorem \ref{thm:optCont}} \label{proof:optCont}

Along the lines of Theorem 5.1 of \cite{touzi2012}, we provide a verification argument to demonstrate that the candidate solution given in Proposition \ref{prop:valueFn} is in fact the value function, and that the corresponding control given in \eqref{eqn:opt-pi-A-and-B} is the optimal control. Since $h(t,\vx)$ in (\ref{valueFn}) is a classical solution to the HJB equation, and $\vpi_\vzeta^*(t)$ is the maximizer of
\[ \vpi ~\longmapsto~ F(t,\vx,\vpi)\;, \]
it suffices to check that the process $\vpi_\vzeta^* = \left(\vpi_\vzeta^*(t)\right)_{t \geq 0}$ is a well-defined admissible control process and that the controlled stochastic differential equation
\begin{align*}
dY^{\vpi_\vzeta^*,\vrho}(t) ~=~ a\left( t,\vrho(t),\vpi_\vzeta^*(t) \right) ~dt  + \left( \vxi_{\vpi_\vzeta^*} (t) - \vxi_\vrho(t) \right)' ~ d\vW(t)\;,
\end{align*}
defines a unique solution for each given initial data.

First, we verify that $\vpi^*_\vzeta$ is an admissible control. This requires us to show that $\vpi^*_\vzeta$ is an $\mathfrak{F}$-adapted vector-valued process with $\vpi^*_\vzeta(t)' \ones = 1$ for all $t$ and that $\pi^*_i(t)$ is bounded almost surely for all $t$ and for $i = 1,..., n$. Clearly, $\vpi^*_\vzeta$ is $\mathfrak{F}$-adapted and  its elements sum up to 1 from the following observation:
\begin{align*}
\ones' \vpi^*_\vzeta ~&=~ \ones' \mA^{-1} \cdot \left[ \frac{1 - \ones' \mA^{-1} \vB}{\ones' \mA^{-1} \ones} \cdot \ones + \vB \right] \\
~&=~ \frac{1 - \ones' \mA^{-1} \vB}{\ones' \mA^{-1} \ones} \cdot \ones' \mA^{-1} \ones + \ones' \mA^{-1} \vB \\
~&=~ 1 - \ones' \mA^{-1} \vB + \ones' \mA^{-1} \vB \\
~&=~ 1\;.
\end{align*}
The fact that $\vpi^*_\vzeta$ is bounded follows from the boundedness of $\vB$ and the elements of $\mA^{-1}$ which was established in the previous proof. Next, we need to show the existence and uniqueness of the solution to the SDE given in the outset of this proof. To this end, note that the drift and volatility terms for any portfolio $\vpi$ are $a(t,\vrho(t),\vpi(t)) = \vpi' \valpha(t) - \tfrac{1}{2} \vpi' \mSigma(t) \vpi - \left(\gamma_\vrho(t) + \delta_\vrho(t) \right)$ and $\left( \vxi_\vpi (t) - \vxi_\vrho(t) \right)$ respectively. Since any admissible $\vpi$, $\vrho$, $\vgamma,$ $\vdelta$ and the elements of $\mSigma(t)$ are all bounded, so too is the drift term $a(t,\vrho(t),\vpi(t))$. In particular it is square integrable for any $t \geq 0$. Moreover, since $\vxi(t)$ is assumed to be square integrable, so is the volatility term. Finally, since neither the drift nor volatility depends on $Y^{\vpi,\vrho}$, both satisfy the required Lipschitz condition given in Theorem 2.2 of \cite{touzi2012}. Then, by the same theorem, the SDE admits a unique solution for any choice of initial data $Y^{\vpi,\vrho}(0)$. \hfill $\blacksquare$

\subsection{Proof of Corollary \ref{cor:properties}} \label{proof:properties}
We first prove (i). Recall that the optimal portfolio is given by:
\begin{align*}
\vpi^*_\vzeta = \mA^{-1} \left[ \frac{1 - \ones' \mA^{-1} \vB}{\ones' \mA^{-1} \ones} \cdot \ones + \vB \right] = \left( 1 - \ones' \mA^{-1} \vB \right) \cdot \frac{1}{\ones' \mA^{-1} \ones} \mA^{-1} \ones + \mA^{-1} \vB
\end{align*}
where $\mA= (\zeta_0 + \zeta_1) \mSigma + \zeta_2 \mQ$ and $\vB = \zeta_0 \valpha + \zeta_1 \mSigma \eta$ and the dependence on $t$ and $\vx$ is dropped for brevity. Now, we can write $\mA$ in the following two ways:
\begin{align*}
\mA &= (\zeta_0 + \zeta_1) \left( \mSigma + \frac{\zeta_2}{\zeta_0 + \zeta_1} \mQ \right)
&&	\mA = \zeta_2 \left( \frac{\zeta_0 + \zeta_1}{\zeta_2} \mSigma + \mQ \right) \\
&= (\zeta_0 + \zeta_1) \mA_1 && ~~~ = \zeta_2 \mA_2
\end{align*}
from which it follows that:
\begin{align*}
\underset{\zeta_0 \to \infty}{\lim} \mA_1 &= \underset{\zeta_1 \to \infty}{\lim} \mA_1 = \mSigma \qquad \text{and} \qquad \underset{\zeta_2 \to \infty}{\lim} \mA_2 = \mQ
\end{align*}
Additionally, we can write the inverse of $\mA$ as follows:
\[ \mA^{-1} = \frac{1}{\zeta_0 + \zeta_1} \mA^{-1}_1 = \frac{1}{\zeta_2} \mA^{-1}_2 \]
Writing the optimal portfolio in terms of $\mA_1$ we have:
\begin{align*}
\vpi^*_\vzeta &= \left[ 1 - \frac{1}{\zeta_0 + \zeta_1} \ones' \mA_1^{-1} \left( \zeta_0 \valpha +  \zeta_1 \mSigma \eta \right) \right] \cdot \frac{1}{\ones' \mA_1^{-1} \ones} \mA_1^{-1} \ones + \frac{1}{\zeta_0 + \zeta_1} \mA_1^{-1}  \left( \zeta_0 \valpha + \zeta_1 \mSigma \eta \right) \\
&= \left[ 1 - \frac{\zeta_0}{\zeta_0 + \zeta_1} \ones' \mA_1^{-1} \valpha - \frac{\zeta_1}{\zeta_0 + \zeta_1} \ones' \mA_1^{-1} \mSigma \eta \right] \cdot \frac{1}{\ones' \mA_1^{-1} \ones} \mA_1^{-1} \ones + \frac{\zeta_0}{\zeta_0 + \zeta_1} \mA_1^{-1} \valpha + \frac{\zeta_1}{\zeta_0 + \zeta_1} \mA_1^{-1} \mSigma \eta
\end{align*}
Now, noting that:
\begin{align*}
\underset{\zeta_0 \to \infty}{\lim} ~\frac{\zeta_0}{\zeta_0 + \zeta_1} = \underset{\zeta_1 \to \infty}{\lim} ~\frac{\zeta_1}{\zeta_0 + \zeta_1} = 1 \\
\underset{\zeta_0 \to \infty}{\lim} ~\frac{\zeta_1}{\zeta_0 + \zeta_1} = \underset{\zeta_1 \to \infty}{\lim} ~\frac{\zeta_0}{\zeta_0 + \zeta_1} = 0
\end{align*}
it follows that:
\[ \underset{\zeta_0 \to \infty}{\lim} \vpi^*_\vzeta = \vpi_{GOP}  \qquad , \qquad
   \underset{\zeta_1 \to \infty}{\lim} \vpi^*_\vzeta = \eta \;. \]
Similarly, writing the optimal portfolio in terms of $\mA_2$ we have:
\begin{align*}
\vpi^*_\vzeta &= \left[ 1 - \frac{\zeta_0}{\zeta_2} \ones' \mA_2^{-1} \valpha - \frac{\zeta_1}{\zeta_2} \ones' \mA_2^{-1} \mSigma \eta \right] \cdot \frac{1}{\ones' \mA_2^{-1} \ones} \mA_2^{-1} \ones + \frac{\zeta_0}{\zeta_2} \mA_2^{-1} \valpha + \frac{\zeta_1}{\zeta_2} \mA_2^{-1} \mSigma \eta
\end{align*}
from which it follows that:
\[ \underset{\zeta_2 \to \infty}{\lim} \vpi^*_\vzeta = \frac{1}{\ones' \mQ^{-1} \ones} \mQ^{-1} \ones \;. \]
Statements (ii) and (iii) can be verified directly by substituting in the appropriate values of $\zeta_0, \zeta_1, \zeta_2$ and $\mQ$ and noting that the GOP is given by $\vpi_{GOP} = (1 - \ones' \mSigma^{-1} \valpha) \cdot \vpi_{MQP} + \mSigma^{-1} \valpha$. \hfill $\blacksquare$

\end{appendices}

\small
\section*{References}
\bibliographystyle{chicago}
\bibliography{OutperformanceAndTracking}

\end{document}